\documentclass[aps,prl,twocolumn,amsmath,amssymb,nofootinbib,showpacs,superscriptaddress]{revtex4-1}
\usepackage[english]{babel}
\usepackage{latexsym}
\usepackage{graphics}
\usepackage{graphicx}
\usepackage{epsfig}
\usepackage{color}
\usepackage{bm}
\usepackage{amsmath}
\usepackage{amssymb}
\usepackage{amsthm}
\usepackage{dcolumn}
\usepackage{bm}
\usepackage{float}
\usepackage{hyperref}
\usepackage{color}
\usepackage{epstopdf}
\usepackage{cleveref}
\usepackage[svgnames]{xcolor}
\usepackage{soul}
\hypersetup{hidelinks,colorlinks=true,allcolors=DarkBlue}


\newcommand{\bM}{\ensuremath{\mathcal{M}}}
\newcommand{\bT}{\ensuremath{\mathcal{T}}}
\newcommand{\bK}{\ensuremath{\mathcal{K}}}
\newcommand{\bH}{\ensuremath{\mathcal{H}}}

\newcommand{\ealm}{$\varepsilon$-almost}
\newcommand{\unit}{universal$_2$}

\newcommand{\eps}{\text{$\varepsilon$}}


\newcommand{\su}{\text{SU${_2}$}}
\newcommand{\xu}{\text{XU${_2}$}}
\newcommand{\eau}{$\varepsilon$\text{-AU}${_2}$}
\newcommand{\easu}{$\varepsilon$-ASU${_2}$}
\newcommand{\eaxu}{$\varepsilon$\text{-AXU}${_2}$}

\newcommand{\otp}{{\rm OTP}}
\newtheorem{defn}{Definition}

\newtheorem{thm}{Theorem}
\newtheorem{lemma}{Lemma}

\newcommand{\KeyPred}{\ensuremath{{\bf KeyPred}}}
\newcommand{\QKD}[1]{\ensuremath{{\bf QKD}_{#1}}}
\newcommand{\Store}{\ensuremath{{\bf Store}}}

\newcommand{\KrecApred}{\ensuremath{\widetilde{K}_{\rm rec}^A}}
\newcommand{\KrecBpred}{\ensuremath{\widetilde{K}_{\rm rec}^B}}

\newcommand{\KrecApredst}{\ensuremath{\overline{K}_{\rm rec}^A}}
\newcommand{\KrecBpredst}{\ensuremath{\overline{K}_{\rm rec}^B}}

\newcommand{\KrecA}{\ensuremath{{K}_{\rm rec}^A}}
\newcommand{\KrecB}{\ensuremath{{K}_{\rm rec}^B}}
\newcommand{\KrecAu}{\ensuremath{\widehat{K}_{\rm rec}^{A}}}
\newcommand{\KrecBu}{\ensuremath{\widehat{K}_{\rm rec}^{B}}}

\newcommand{\KOTPApred}[1]{\ensuremath{\widetilde{K}_{{\rm OTP}(#1)}^A}}
\newcommand{\KOTPBpred}[1]{\ensuremath{\widetilde{K}_{{\rm OTP}(#1)}^B}}
\newcommand{\KOTPApredst}[1]{\ensuremath{\overline{K}_{{\rm OTP}(#1)}^A}}
\newcommand{\KOTPBpredst}[1]{\ensuremath{\overline{K}_{{\rm OTP}(#1)}^B}}

\newcommand{\KOTPA}[1]{\ensuremath{{K}_{{\rm OTP}(#1)}^{A}}}
\newcommand{\KOTPB}[1]{\ensuremath{{K}_{{\rm OTP}(#1)}^B}}
\newcommand{\KOTPAu}[1]{\ensuremath{\widehat{K}_{{\rm OTP}(#1)}^{A}}}
\newcommand{\KOTPBu}[1]{\ensuremath{\widehat{K}_{{\rm OTP}(#1)}^B}}

\newcommand{\Akeyu}[1]{\ensuremath{\widehat{A}_{#1}}}
\newcommand{\Bkeyu}[1]{\ensuremath{\widehat{B}_{#1}}}
\newcommand{\Akey}[1]{\ensuremath{{A}_{#1}}}
\newcommand{\Bkey}[1]{\ensuremath{{B}_{#1}}}

\newcommand{\MA}[1]{\ensuremath{M_#1^\uparrow}}
\newcommand{\MB}[1]{\ensuremath{M_#1^\downarrow}}

\newcommand{\AuthAB}[1]{\ensuremath{{\bf Auth}^{A\rightarrow B}_{#1}}}
\newcommand{\AuthBA}[1]{\ensuremath{{\bf Auth}^{B\rightarrow A}_{#1}}}

\newcommand{\AuthABr}[1]{\ensuremath{{\bf Auth}^{A\rightarrow B, {\rm real}}_{#1}}}
\newcommand{\AuthBAr}[1]{\ensuremath{{\bf Auth}^{B\rightarrow A, {\rm real}}_{#1}}}

\newcommand{\AuthABi}[1]{\ensuremath{{\bf Auth}^{A\rightarrow B, {\rm ideal}}_{#1}}}
\newcommand{\AuthBAi}[1]{\ensuremath{{\bf Auth}^{B\rightarrow A, {\rm ideal}}_{#1}}}

\newcommand{\TQ}[1]{\ensuremath{{\Theta}_{#1}^{\rm QKD}}}
\newcommand{\TA}[1]{\ensuremath{{\Theta}_{#1}^{\rm Auth}}}
\newcommand{\Tamp}[1]{\ensuremath{{\Theta}_{#1}}}
\newcommand{\QE}[1]{\ensuremath{{\mathcal{E}}_{#1}}}

\newcommand{\Suc}[1]{\ensuremath{{\rm Suc}_{#1}}}
\newcommand{\Nsuc}{\ensuremath{N_{\rm suc}}}
\newcommand{\Nmax}{\ensuremath{N_{\rm max}}}

\newcommand{\real}{\ensuremath{^{\rm real}}}
\newcommand{\ideal}{\ensuremath{^{\rm ideal}}}

\newcommand{\ketbra}[2]{|{#1}\rangle_{{#2}}\langle#1|}
\newcommand{\ket}[2]{|{#1}\rangle_{{#2}}}

\newcommand{\Lrec}{\ensuremath{L_{\rm rec}}}
\newcommand{\LOTP}{\ensuremath{L_{\rm OTP}}}
\newcommand{\Lsec}{\ensuremath{L_{\rm sec}}}

\newcommand{\Dist}[2]{\ensuremath{\mathcal{D}(#1,#2)}}
\newcommand{\acc}{\ensuremath{{\rm acc}}}

\begin{document}	
	
\title{Lightweight authentication for quantum key distribution}
\author{E.O.~Kiktenko}
\affiliation{Russian Quantum Center, Skolkovo, Moscow 143025, Russia}
\affiliation{Bauman Moscow State Technical University, Moscow 105005, Russia}
\affiliation{Steklov Mathematical Institute of Russian Academy of Sciences, Moscow 119991, Russia}
\author{A.O.~Malyshev}
\affiliation{Russian Quantum Center, Skolkovo, Moscow 143025, Russia}
\author{M.A.~Gavreev}
\affiliation{Russian Quantum Center, Skolkovo, Moscow 143025, Russia}
\affiliation{Bauman Moscow State Technical University, Moscow 105005, Russia}
\author{A.A.~Bozhedarov}
\affiliation{Russian Quantum Center, Skolkovo, Moscow 143025, Russia}
\affiliation{Skolkovo Institute of Science and Technology, Moscow 121205, Russia}
\author{N.O.~Pozhar}
\affiliation{Russian Quantum Center, Skolkovo, Moscow 143025, Russia}
\author{M.N.~Anufriev}
\affiliation{Russian Quantum Center, Skolkovo, Moscow 143025, Russia}
\author{A.K.~Fedorov}	
\affiliation{Russian Quantum Center, Skolkovo, Moscow 143025, Russia}

\begin{abstract}
Quantum key distribution (QKD) enables unconditionally secure communication between distinct parties using a quantum channel and an authentic public channel.
Reducing the portion of quantum-generated secret keys, that is consumed during the authentication procedure, is of significant importance for improving the performance of QKD systems.  
In the present work, we develop a lightweight authentication protocol for QKD based on a `ping-pong' scheme of authenticity check for QKD.
An important feature of this scheme is that the only one authentication tag is generated and transmitted during each of the QKD post-processing rounds.
For the tag generation purpose, we design an unconditionally secure procedure based on the concept of key recycling. 
The procedure is based on the combination of almost universal$_2$ polynomial hashing, XOR universal$_2$ Toeplitz hashing, and one-time pad (OTP) encryption.
We demonstrate how to minimize both the length of the recycled key and the size of the authentication key, that is required for OTP encryption.
As a result, in real case scenarios, the portion of quantum-generated secret keys that is consumed for the authentication purposes is below 1\%.
Finally, we provide a security analysis of the full quantum key growing process in the framework of universally composable security.
\end{abstract}

\maketitle

\section{Introduction}

QKD is a method for distributing provably secure cryptographic keys in insecure communications networks~\cite{BB84}. 
For this purpose, QKD systems encode information in quantum states of photons and transmit them through optical channels~\cite{Gisin2002,Scarani2009}. 
The security of QKD is then based on laws of quantum physics rather than on computational complexity as is usually the case for public-key cryptography. 
This technology has attracted a significant amount of interest last decades, and industrial QKD systems are now available at retail~\cite{Lo2015,Lo2016,Market}. 

Besides transmitting quantum states in the QKD technology, legitimate users also employ an essential post-processing procedure via an authentic public channel~\cite{Gisin2002,Scarani2009,Lo2015,Lo2016}.
The post-processing procedure consists of key sifting, information reconciliation, privacy amplification, and other supplemental steps.
The authentic classical channel, which prevents malicious modification of the transmitted classical data by an eavesdropper, is essential in order to prevent man-in-the-middle attacks.
The problem of providing classical channel authenticity for QKD has been considered in various aspects in~\cite{Peev2004,Cederlof2008,Peev2011,Abidin2011,Jouguet2011,Mosca2013,Larsson2014,Portmann2014,Larsson2016}.

A conventional approach to solving the authentication problem in QKD systems is to use the Wegman-Carter scheme~\cite{WegmanCarter1981,Stinson1994},
which provides unconditional security.
However, the use of the scheme requires a pair of symmetric secret keys by itself.
Moreover, like in the case of the unconditionally secure encryption with OTP~\cite{Vernam1926}, new secret keys are required for each use of the authenticated channel.
From this perspective, the QKD workflow appears to be a key growing process, since the parties already need to have a short pair of pre-distributed keys before the launching the first QKD round.
For authentication in the second and subsequent rounds, the parts of quantum-generated secret keys from the previous round could be used.
In Ref.~\cite{Quade2009} it was shown that such an approach provides provable composable security of the whole key growing process.
However, an increase in the performance of QKD systems faces a number of challenges, which include the reduction of authentication costs. 
An important task is then to find hash functions that allow minimizing the secret key consumption. 

In our contribution, we present a practical authentication protocol specially designed for minimizing a secret its key consumption.
This goal is achieved by (i) reducing the number of required generations of authentication tags down to one per QKD round and (ii) reducing the size of a secret key consumed by each tag generation. 
The reductions of the key size for tag generations is achieved by the transformation of the delayed authentication scheme previously considered in Ref.~\cite{Peev2009,Maurhart2010,Larsson2016} into a new `ping-pong' scheme, 
where the direction of the tag transmission alters each following QKD round.
In order to minimize the secret key consumption on the tag generation, we consider a combining specially designed almost universal$_2$ polynomial hashing with a standard XOR universal$_2$ Toeplitz hashing followed by the OTP encryption.
This construction allows employing a key recycling approach~\cite{WegmanCarter1981,Portmann2014}, where a permanent `recycling' key is used for polynomial and Toeplitz hashing, while only keys for the OTP encryption require an update.
Then we describe a method for tuning parameters of employed families in order to minimize the length of OTP keys, picked up from each QKD round, while keeping a length of the recycled key, picked up at once from the first QKD round, to be reasonable. 
Finally, we provide a formal security proof of the resulting key growing process in the universally composable security framework that is a standard tool for considering QKD-based key growing process~\cite{BenOr2005}. 
As a result, we obtain a provable secure authentication protocol with very low key consumption, that is why we refer to it as a lightweight authentication protocol. 

Our work is organized as follows. 
In Sec.~\ref{sec:MAC}, we describe the unconditionally secure message authentication code (MAC) scheme with the use of a universal family of hash functions. 
We also review some known approaches for improving their performance. 
In Sec.~\ref{sec:lightweight}, we present our lightweight authentication protocol and show how to derive its parameters with respect to the required security level.
In Sec.~\ref{sec:security} we provide a security analysis of the full key growing process. 
Finally, we summarize the main results and give an outlook in Sec.~\ref{sec:conclusion}.

\section{Unconditionally secure authentication}\label{sec:MAC}

As soon as two legitimate parties, Alice and Bob, have an urge to communicate one each other, they almost inevitably face the problem of assuring that (i) received messages are indeed sent by a claimed sender and
(ii) no adversary can forge messages such that fraud remains undetected. 
This problem is referred to as the authentication problem and it is known in cryptography for ages. 
For example, signatures and seals are ancient yet eligible solutions to \emph{authenticate} handwritten documents and they provide enough intuition about what authentication is. 
Nevertheless, since the topic of this paper lies within the area of digital communication, further we restrict ourselves only to cryptographic authentication schemes~\cite{Schneier1996}. 
From the very beginning, we put ourselves in the framework of unconditional (information-theoretic) security, where we do not rely on any assumption about the computational abilities of an adversary (Eve).

\subsection{Message authentication codes with strongly universal hashing}

A way of providing authentication is to employ the MAC scheme.
The main idea behind it is as follows. 
Suppose that Alice and Bob have a common secret key $k$. 
Then they use the following procedure: 

(i) If Alice wants to authenticate the message $m$, she generates an authentication tag $t$ using the MAC, which is calculated based on $m$ and $k$, and then sends a pair $(m, t)$ to Bob. 

(ii) Given a received pair $(m', t')$, which could be different from $(m,t)$ because of an attack by Eve, Bob generates the corresponding tag $t_{\rm check}$ from $m'$ and $k$, and checks whether the obtained $t_{\rm check}$ is equal to $t'$.

(iii) If so, Bob supposes that the message was indeed sent by Alice and $m'=m$ (and $t'=t$). 

The main point behind this protocol is that, having the intercepted a pair $(m,t)$ but \emph{not} possessing $k$, for Eve it should be practically impossible to generate an alternative valid message-tag pair $(m',t')$ with $m\neq m'$ in order to cheat Bob.
At the same time, it should be impossible for Eve to generate a pair $(m',t')$ for any $m'$ without any authenticated message previously transmitted by Alice.

The described protocol can be realized with the tag generation scheme on the basis of a strongly universal hash family. 
Strongly universal hash-functions have been proposed in Ref.~\cite{WegmanCarter1981} and then formally defined in Ref.~\cite{Stinson1994}. 
A peculiar feature of this method is to employ a whole family of functions with desired properties, but not a single hash function.

Let us consider a set of all possible messages \bM{}, the set of all possible tags \bT{}, and the set of keys \bK{}.
Each key $k\in\bK{}$ defines a function from the family ${h}_{k}:\bM{}\rightarrow\bT{}$.
\begin{defn}[\ealm{} strong \unit{} family of functions]
	A family of functions  
	$\mathcal{H}=\left\{ h_k\colon \mathcal{M}\rightarrow \mathcal{T} \right\}_{k\in\mathcal{K}} $
	is called \ealm{} strongly universal$_2$ (\easu{}) if  for any \emph{distinct} messages $m,m' \in \mathcal{M}$ and \emph{any} tags $t,t' \in \mathcal{T}$  the following two conditions are satisfied:
	\begin{eqnarray}
	&&\underset{k\overset{\$}{\leftarrow}\mathcal{K}}\Pr[h_k(m)=t]=\frac{1}{|\mathcal{T}|},\label{eq:asu1}\\
	&&\underset{k\overset{\$}{\leftarrow}\mathcal{K}}\Pr[h_k(m)=t, h_k(m')=t']\leq \frac{\varepsilon}{|\mathcal{T}|}. \label{eq:asu2}
	\end{eqnarray}
	If $\varepsilon=|\mathcal{T}|^{-1}$, then $\bH$ is called strongly \unit{} (\su{}).
\end{defn}
Here we use $k\overset{\$}{\leftarrow}\mathcal{K}$ for defining a uniformly random generation of an element $k$ from the  set $\mathcal{K}$. 
We note that sometimes condition~\eqref{eq:asu1} is omitted. 

In Ref.~\cite{WegmanCarter1981} it was demonstrated how to construct the family $\mathcal{H}$, and shown how to use it in order to achieve an unconditionally secure authentication.
For this purpose, Alice calculates a tag for message $m$ as $t:=h_k(m)$, where $k$ is a secret key shared by Alice and Bob.
Due to the fact that from the Eve's perspective, the secret key is a uniform random variable, the probability of the impersonation attack, where Eve tries to generate a valid pair $(m',t')$ without any message sent by Alice, is limited by $|\bT|^{-1}$ according to Eq.~\eqref{eq:asu1}.
It is directly follows from Eqs.~\eqref{eq:asu1} and~\eqref{eq:asu2} that for $m'\neq m$ the following expression holds:
\begin{equation}\label{eq:condprob}
	\underset{k\overset{\$}{\leftarrow}\mathcal{K}}\Pr[h_k(m')=t'|h_k(m)=t]\leq \eps.
\end{equation}
Therefore, the probability of the successful realization of the substitution attack, where Eve tries to modify a message (and probably a tag) originally sent by Alice, is limited by $\eps$.

Thus, the use of an \easu{} family with small enough values of $\eps$ allows legitimate parties to achieve unconditionally secure authentication.
The construction issues of appropriate \easu{} families have been studied in various aspects~\cite{WegmanCarter1981,Stinson2002,Boer1993,Bierbrauer1994,Krawczyk1994,Lemire2014}.
Each of these approaches provides a family of hash-functions with its own trade-offs between sizes of $\bM$ and $\bK$, security parameter $\eps$, and efficiency.
Below we consider several common approaches for improving the performance of the authentication scheme based on $\eps$-ASU$_2$ family.

\subsection{Key recycling}\label{sec:recycling}

Here a point of concern is an amount of a key, which is distributed by legitimate parties and required for \easu{}~hashing.
It turns out that in order to achieve unconditional security, Alice and Bob have to use a distinct key $k$ for \emph{each} message, i.e. the key consumption is significant. 
This shortcoming can be avoided. 
In Ref.~\cite{WegmanCarter1981} Wegman and Carter have proposed to choose a key $k$ once.
For each generated tag, we use the XOR operation (bitwise modulo-2 addition) of the resulting tag with a new one-time pad (OTP) key $k_{\rm OTP}$ of $\tau$ bits length, 
where $\tau$ is the tag size (here we assume that $\bT=\{0,1\}^\tau$) and $\tau\ll\log_2{|\mathcal{K}|}$. 
Consequently, in order to authenticate each new message legitimate parties \emph{recycle} the key $k$ and review OTP keys $k_\otp$ only.
Therefore, the key recycling procedure may decrease the demand for secret keys significantly.

Moreover, the key recycling scheme allows using a weaker class of function family, namely, \ealm{} XOR \unit{}. 
\begin{defn}[\ealm{} XOR \unit{} family of hash functions]
	A family of hash functions 
	$\bH=\left\{ h_k\colon \mathcal{M}\rightarrow \{0,1\}^\tau \right\}_{k \in\bK}  $
	is \ealm{} XOR \unit{} (\eaxu{}), if for all distinct $m,m'\in \mathcal{M}$, uniformly random chosen $k\in \bK$ and any $c \in \{0,1\}^\tau$,
	\begin{equation}
	\underset{k\overset{\$}{\leftarrow}\bK}\Pr[h_k(m)\oplus h_k(m') = c] \leq \varepsilon,
	\end{equation}
	where $\oplus$ stands for XOR.
	If $\varepsilon=|\mathcal{T}|^{-1}$, then the family of such hash functions is called XOR universal$_2$ (XU$_2$).
\end{defn}

Consider the following authentication scheme based on an \eaxu{}~family 
$\bH{}= \left\{h_k \colon \bM{} \rightarrow \{0,1\}^\tau \right\}_{k\in\bK}$
with the use of the key recycling scheme.
It is equivalent to the construction of a new family of the following form:
$
\bH^{\rm ext}=\left\{ h_{k,k_\otp}\colon \bM \rightarrow \{0,1\}^\tau \right\}_{k\in\bK,k_\otp\in\{0,1\}^\tau}
$
with $h_{k,k_\otp}(m) := h_k(m) \oplus k_\otp.$
One can see that the resulting family $\bH^{\rm ext}$ is \easu{}, since the requirement given by Eq~\eqref{eq:asu1} follows from the OTP encryption, while the requirement given by Eq.~\eqref{eq:asu2} follows from the fact that
\begin{multline}
	\underset{\substack{k\overset{\$}{\leftarrow}\bK,\\k_{\rm OTP}\overset{\$}{\leftarrow}\{0,1\}^\tau}}\Pr[h_{k,k_\otp}(m)=t,h_{k,k_\otp}(m')=t']\\=\frac{1}{2^\tau}\underset{\substack{k\overset{\$}{\leftarrow}\bK}}{\Pr}[h_{k}(m)\oplus h_{k}(m')=t'\oplus t].
\end{multline}
Thus, the authentication scheme based on the use of an \eaxu{}~family with the key recycling scheme is equivalent to the authentication scheme on the basis of an \easu{} family.
Thus, it allows one to achieve both unconditionally secure authentication and reducing the size of the secret key consumed by each tag generation.

It is important to note that in this scheme the use of the key recycling leads to a disclosure of the used hash function after a number of rounds (the number of rounds can be made arbitrarily large). 
The security proof of the considered scheme in a composable security framework is provided in Ref.~\cite{Portmann2014}.
It is shown that schemes with authenticating $n$ messages based on the key recycling scheme and \eaxu{} family are $n\eps$-secure. 

Following the notion of the composable security framework~\cite{Portmann2014}, it means that an abstract distinguisher with unlimited computational resources, 
has an advantage at most $n\eps$ in distinguishing the considered scheme from an ideal authentication system.
Such a notion of security is particularly important for security analysis of the QKD post-processing procedure.

\subsection{Combining families of hash functions}

A useful technique for the construction of the ASU$_2$ hash family with desired properties is based on the composition of a given family ASU$_2$ and a family with weaker requirements, such as a universal$_2$ family~\cite{WegmanCarter1979}. 
\begin{defn}[\eps-almost \unit{} family of functions]
	A family of functions 
	$
	\bH=\left\{ h_k\colon \bM\rightarrow \bT \right\}_{k\in\bK}
	$
	is called \ealm{} universal$_2$ (\eau{}) if  for any \emph{distinct} messages $m,m' \in \mathcal{M}$ and uniformly chosen $k \in \bK$,
	\begin{equation}\label{eq:au}
	\underset{k\overset{\$}{\leftarrow}\bK}\Pr[h_k(m)=h_k(m')]\leq \eps
	\end{equation}
	If $\varepsilon=|\mathcal{T}|^{-1}$, then $\bH$ is called \unit{} (U$_{2}$).
\end{defn}
A composition of $\eps_1$-AU$_2$ and $\eps_2$-ASU$_2$ families results in  $(\eps_1+\eps_2)$-ASU$_2$ family~\cite{Stinson1994}.
Therefore, having two hash families, one has a useful tool for obtaining a resulting \easu{} family with desired properties.

The similar strategy is applicable in principle  for constructing an \eaxu{} family using the composition $\eps_1$-AU$_2$ and $\eps_2$-AXU$_2$ families.
\begin{thm}\label{thm:1}
	Let	
	$
	\bH^{(1)}=\{h^{(1)}_k:\bM\rightarrow\bT_1\}_{k_1\in\bK_1} 
	$	
	be an $\eps_1$-AU$_2$ family, and 
	$	
	\bH^{(2)}=\{h^{(2)}_k:\bT_1\rightarrow \{0,1\}^\tau\}_{k_2\in\bK_2} 
	$
	be an $\eps_2$-AXU$_2$ family. 
	Then the family
	\begin{equation}
		\bH=\!\{h_{k_1,k_2}=h^{(2)}_{k_2}\circ h^{(1)}_{k_1}:\bM\rightarrow \{0,1\}^\tau\}_{k_1\in\bK_1,k_2\in\bK_2}
	\end{equation}
	is an $(\eps_1+\eps_2)$-AXU$_2$, where $\circ$ denotes a standard function composition. 
\end{thm}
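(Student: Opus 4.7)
The plan is to fix arbitrary distinct $m, m' \in \bM$ and an arbitrary $c \in \{0,1\}^\tau$, and bound the probability (over independent uniform choices of $k_1 \in \bK_1$ and $k_2 \in \bK_2$) that $h_{k_1,k_2}(m) \oplus h_{k_1,k_2}(m') = c$. The natural case split is on whether the inner hash collides, i.e.\ whether $h^{(1)}_{k_1}(m) = h^{(1)}_{k_1}(m')$. This mirrors the standard argument for the Stinson composition lemma cited just above, which handles the AU$_2 \circ$ ASU$_2$ version; the modification needed here is only to keep track of the XOR structure instead of the pointwise structure.

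In the collision case, $h^{(2)}_{k_2}$ is evaluated at the same point, so the XOR of the two outputs is automatically $0$. Hence the joint event ``collision and XOR equals $c$'' is contained in the event ``collision'', whose probability is at most $\eps_1$ by the \eau{} property of $\bH^{(1)}$. In the no-collision case, the two inputs $t_1 := h^{(1)}_{k_1}(m)$ and $t_1' := h^{(1)}_{k_1}(m')$ are distinct elements of $\bT_1$, and since $k_2$ is independent of $k_1$, I can condition on the value of $k_1$ and apply the \eaxu{} property of $\bH^{(2)}$ to conclude that, for every fixed non-colliding pair $(t_1, t_1')$, the probability over $k_2$ that $h^{(2)}_{k_2}(t_1) \oplus h^{(2)}_{k_2}(t_1') = c$ is at most $\eps_2$. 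Averaging over $k_1$ restricted to the no-collision event gives an overall bound of $\eps_2$ for the joint event ``no collision and XOR equals $c$''.

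Adding the two cases yields the claimed bound $\eps_1 + \eps_2$. The only subtlety worth flagging is the independence of $k_1$ and $k_2$, which is what permits the conditioning step in the no-collision case; without it, one could not factor the joint probability into the inner-collision probability (controlled by $\bH^{(1)}$) and the outer XOR probability (controlled by $\bH^{(2)}$). Beyond that, there is no real obstacle — this is essentially a clean case analysis rather than a computation, so I would expect the write-up to be short and mostly symbolic.
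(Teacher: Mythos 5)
Your proof is correct and follows essentially the same route as the paper: both condition on whether the inner hash collides, bounding the collision branch by $\eps_1$ via the AU$_2$ property and the non-collision branch by $\eps_2$ via the AXU$_2$ property applied after conditioning on $k_1$. The only cosmetic difference is that the paper treats $c=(0,\ldots,0)$ and $c\neq(0,\ldots,0)$ as separate cases, whereas you handle all $c$ uniformly (observing that a collision forces the XOR to be zero), which gives the same bound $\eps_1+\eps_2$.
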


See Appendix~\ref{apx:thm1} for the proof.
Theorem~\ref{thm:1} allows obtaining a $(\eps_1+\eps_2)$-AXU$_2$ hash family as the composition of $\eps_1$-AU$_2$ family with $\eps_2$-AXU$_2$ family and the OTP encryption.

\section{Lightweight authentication protocol for QKD} \label{sec:lightweight}

In this section, we describe an approach for solving the authentication task in the framework of QKD by introducing a lightweight authentication protocol.
The suggested protocol is based on the key recycling scheme and the combination of universal families, which are described above, as well as on a fresh concept of the ping-pong delayed authentication. 

\subsection{Ping-pong delayed authentication}

The workflow of a QKD device can be split into two main stages.
The first ``quantum'' stage is related to the preparing, transmitting and measuring quantum signals (usually, attenuated laser pulses) through an untrusted quantum channel (optical fiber or free space).
As s result of the first stage, the parties, Alice and Bob, obtain a set of records regarding preparing and measuring events.
These records are usually referred to as \emph{raw keys}.
The second stage is the post-processing procedure. 
It is aimed at the extraction (or distillation) of \emph{secret keys} from raw keys, or coming to the conclusion that such an extraction is impossible due to eavesdropper's (Eve's) activities.
The criteria here is that the quantum bit error rate (QBER) value exceeds a certain critical threshold~\cite{Gisin2002}.
In the latter case, the parties just abort the QKD session.

The post-processing stage typically consists of sifting, parameter estimation, information reconciliation, and privacy amplification procedures~\cite{Kiktenko2016}.
During these procedures, the parties communicate with each other via a classical channel, which has to be authenticated in an unconditionally secure way.
It means that any tampering with classical communication by Eve in the classical channel should result in aborting QKD protocol in the same way how it is aborted in the case of its interception in the quantum channel. 
If the QBER value is below the threshold and no tampering in classical communication is detected, Alice and Bob obtain a pair of provably secret keys.
Then the post-processing procedure can be repeated again with a new pair of raw keys.
We refer to such sequences of stage repeating as \emph{rounds}.

The main point of concern is that the unconditionally secure authentication scheme itself requires a symmetric key.
That is why the process of QKD can be considered as a key-growing scheme (or secret-growing scheme):
Alice and Bob have to share some pre-distributed keys in order to provide authentication in the first QKD round and they use a portion of quantum-generated keys for authentication in the following rounds. 
That is why reducing the portion of quantum-generated secret keys, that is consumed during the authentication procedure, is of significant importance for improving the performance of QKD systems.  

\begin{figure}
	\centering
	\includegraphics[width=0.9\linewidth]{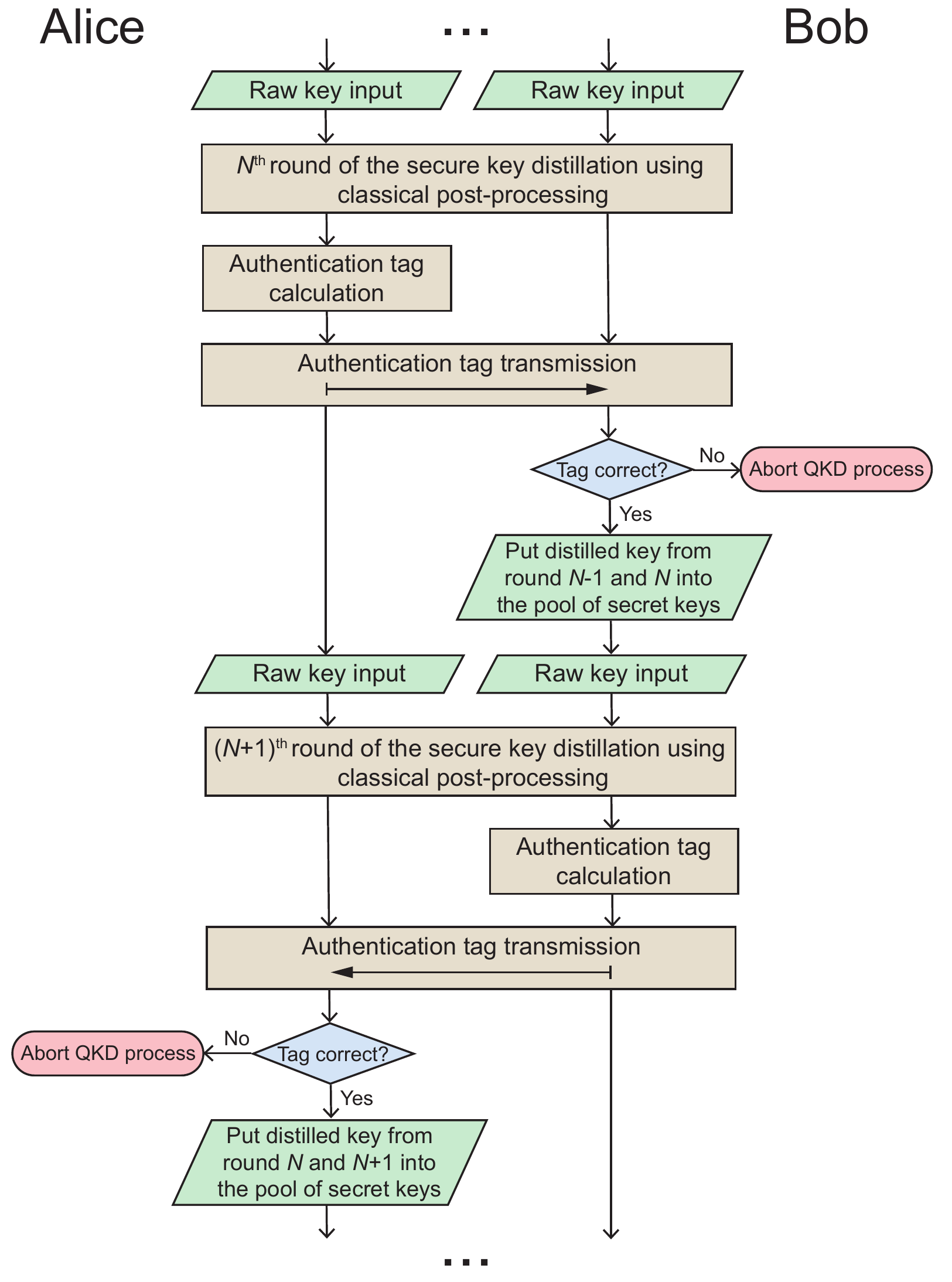}
	\caption{Scheme of the QKD post-processing procedure with delayed authentication. Each of legitimate parties generate and check authentication tags once for the whole traffic during the post-processing procedure.}
	\label{fig:delayed}
\end{figure}

It turns out that it is reasonable to check the authenticity of the classical channel at the very end of the post-processing round, rather than to add an authentication tag to each classical message separately.
The idea is that if Eve has tampered with a classical channel, then the legitimate parties are able to detect this event by authenticating all the traffic in the classical channel once. 
If the authenticity check procedure fails, then secret keys will not be generated.
Meanwhile, in the favorable case, if Eve does not interfere with classical communication, the authentication key consumption will be small.

However, an important point is that, actually, there are two classical channels in which authenticity should be checked: from Alice to Bob and from Bob to Alice.
In order to obtain secure symmetric keys, each of the legitimate parties has to be sure that all the sent and received messages are not modified by Eve.

To address this issue we propose the following solution for authenticity check, that it is applied through pairs of rounds rather than a single round.
The scheme of the designed solution is presented in Fig.~\ref{fig:delayed}.
During each of the post-processing rounds the unconditionally secure authentication tag is transmitted from one party to another in a `ping-pong' manner: 
If in $N^{\rm th}$ round a tag is sent from Alice to Bob, then in $(N+1)^{\rm th}$ round a tag is sent back from Bob to Alice, and so on.
The tag is computed after the final privacy amplification step and is based on a string composed of all the income and outcome classical messages transmitted within the current post-processing round.
After receiving the tag, the legitimate party checks its validity by computing a verification tag from his (her) own versions of sent and received classical messages.
If the check passed, then the party becomes sure that (up to the fixed error probability) the classical communication was not modified by Eve, so the obtained keys are provable secure.
Otherwise, the party terminates the whole QKD process, tries to reach the partner on another channel, and compares the states of secret pools.

In the case of a successful check, the party adds the obtained key into the pool of secret keys.
Moreover, the party also adds to the distilled key from the previous round (if the number of the current round is greater than one).
This is because at the current round the party who receives the tag becomes sure that in the previous round, where he (she) was a tag sender, the authenticity check was also passed.
Otherwise, the protocol should be already terminated on the other side and no valid tag should come at the current round.

Thus, the key consumption is defined by computing a single tag of all the classical messages used in the round.
The exception is a final round where additional authenticated acknowledgement message from a party who performed a check is required.
The generation of this message could be considered as a fictitious post-processing round where no secret key is produced.

\subsection{Instantiation of universal families}

Here we consider the problem of minimizing a key consumption for computing a tag with a given error probability threshold.
For this purpose, we employ the key recycling approach based on the use of an AXU$_2$ family with the OTP encryption (see Subsection~\ref{sec:recycling}).
We construct an AXU$_2$ family by combining AU$_2$ and XU$_2$ families.

Let us consider a value $\eps_{\rm auth}$ that describes an error probability during the authentication procedure. 
On the one hand, with a given fixed $\eps_{\rm auth}$ it is preferable to have a length of the final tag to be as small as possible in order to minimize the OTP key consumption.
This is due to the fact that the length of the tag is equal to one of the OTP keys during each round of the authentication procedure.
Therefore,  it is optimal to use an XU$_2$ family with the minimal possible hash tag length $\tau=-\lceil\log_2\eps_{\rm auth}\rceil$ (hereinafter $\lceil\cdot\rceil$ and $\lfloor\cdot\rfloor$ stand for the standard ceil and floor rounding operations).

On the other hand, it is also important to minimize the length of the recycled key as well.
As it follows from the Stinson bound~\cite{Stinson1996} for $\eps$-AXU$_2$ family,
\begin{equation} \label{eq:Stinson}
	|\bK| \geq \frac{|\bM| (|\bT|-1)}{|\bT|\eps(|\bM|-1)+|\bT|-|\bM|},
\end{equation}
the size of a key, which defines an element from \xu{} family ($\eps= |\bT|^{-1}$), is at least as large as a length of input message. 
It is quite expensive for use in the QKD since message sizes can be about several Mbits.
In order to decrease the length of the required key, we employ a preceding \xu{} hashing with using a function from AU$_2$ family for decreasing the length of the input string.
Such a pre-compression comes together with an additional collision probability $\eps_1$ corresponding to the employed $\eps_1$-AU$_2$ family.
However, we can choose $\eps_1\ll2^{-\tau}$ such that the total error probability $\eps_1+2^{-\tau}$ becomes almost equal to the optimal value $2^{-\tau}$.
As a result, we obtain a scheme that has an optimal OTP key consumption with a recycled key length of the order of $\log\log|\bM|$.
Let us then consider the choice of the particular \eau{} and \xu{} families.

For an ASU$_2$ family we choose a modification of the well-known method of polynomial hashing~\cite{Krovetz2000}.
This approach starts from calculating a polynomial over a finite field with coefficients given by the input and the calculation point given by the random key.
Consider a prime number $p$ of the form $p=2^w+\delta_w,$ where $\delta_w$ is the smallest integer s.t. $p$ is prime for given integer $w$.
Let $\mu$ be an upper bound on a length of an authenticated message, that is a maximal total length of all the messages used in a single round of the QKD post-processing.
Consider a family 
$\bH^{(1)}=\{h^{(1)}_k:\{0,1\}^{\leq\mu}\rightarrow\{0,1\}^{w+1}\}_{k\in\{0,1\}^w}$
with
\begin{equation}
	h_k^{(1)}(m) = \textsf{str}\left(\sum_{i=1}^l\textsf{int}(m_i) \textsf{int}(k)^{i-1}~~({\rm mod}~p) \right).
\end{equation}
Here $\{0,1\}^{\leq\mu}$ stands for a set of all bit-strings of length less or equal to $\mu$, $l:=\lceil (\mu+1)/w\rceil$, \textsf{str} and \textsf{int} are standard functions providing a transition between bit-string and integer number representation, 
and $\{m_i\}_{i=1}^{l}$ are $w$-bit chunks of $m$, which are obtained first by concatenation of $m$ with $1$ followed by a block of zeros in order to achieve an extended string of length $lw$ and then by splitting the resulting string into $l$ of $w$-bit pieces.

A precise statement is then as follows:
\begin{thm}\label{thm:2}
	$\bH^{(1)}$ is an $\eps_1$-AU$_2$ family with
	\begin{equation}\label{eq:epsforpoly}
	\eps_1 =\left \lceil \mu/w \right \rceil 2^{-w}.
	\end{equation}
\end{thm}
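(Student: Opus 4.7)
The plan is to follow the standard template for polynomial-evaluation hashing, with the only real subtlety being that messages have variable lengths. Fix two distinct messages $m,m'\in\{0,1\}^{\leq\mu}$, and let $L:=\lceil(\mu+1)/w\rceil$ be the maximal number of chunks that can arise. I would normalize both padded messages to length $Lw$ by appending additional zero chunks at the high end; this does not alter the hash value because zero chunks contribute $0$ to the polynomial sum, but it lets me treat both hashes as polynomials in $\textsf{int}(k)$ of the same (formal) degree $L-1$ over $\mathbb{F}_p$.

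Next I would verify the crucial injectivity property of the ``append $1$ then zeros'' padding: if $m\neq m'$, then the padded-and-extended chunk sequences $m\Vert 1\Vert 0^{Lw-|m|-1}$ and $m'\Vert 1\Vert 0^{Lw-|m'|-1}$ disagree in at least one bit position. If $|m|=|m'|$ this is immediate; if $|m|<|m'|$, then at position $|m'|+1$ the first string has a $0$ (it sits in the zero-padding region) while the second has the trailing $1$, which gives the contradiction. Hence the padded chunk sequences differ in at least one chunk.

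Now define $\Delta(x):=\sum_{i=1}^{L}\bigl(\textsf{int}(m_i)-\textsf{int}(m'_i)\bigr)\,x^{i-1}\in\mathbb{F}_p[x]$, so that $h_k^{(1)}(m)=h_k^{(1)}(m')$ iff $\Delta(\textsf{int}(k))\equiv 0\pmod p$. Because each chunk is a $w$-bit integer, every coefficient lies in $\{-(2^w-1),\dots,2^w-1\}$, and $p=2^w+\delta_w>2^w$ forces such a coefficient to vanish mod $p$ only if the two chunks are literally equal. By the injectivity of the padding, some coefficient is nonzero mod $p$, so $\Delta$ is a nonzero polynomial of degree at most $L-1\leq\lceil\mu/w\rceil$ over the field $\mathbb{F}_p$.

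Finally, a nonzero polynomial of degree $d$ over a field has at most $d$ roots, so $\Delta$ has at most $\lceil\mu/w\rceil$ roots in $\mathbb{F}_p$. Since $\textsf{int}(k)$ ranges injectively over $\{0,1,\dots,2^w-1\}\subset\mathbb{F}_p$ as $k$ is sampled uniformly from $\{0,1\}^w$, the collision probability is bounded by $\lceil\mu/w\rceil/2^w$, which is exactly $\eps_1$. The main obstacle is really just step two — pinning down why the length-suffixed, $1$-then-$0$ padding is injective across different message lengths — since without that one could have $\Delta\equiv 0$ from genuinely distinct messages; the polynomial-root count in the last step is entirely routine.
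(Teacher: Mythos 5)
Your proof is correct and follows essentially the same approach as the paper's own proof: form the difference polynomial over $\mathbb{F}_p$, argue it is nonzero of degree at most $\lceil\mu/w\rceil$, and bound the collision probability by the root count divided by $2^w$. The extra care you take with the injectivity of the $1$-then-zeros padding across different message lengths and with coefficients remaining nonzero modulo $p>2^w$ merely fills in details the paper leaves implicit.
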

See Appendix~\ref{apx:thm2} for the proof. 

From the practical point of view, it is also useful to consider a generalized family of the following form:
$
\bH^{(1)}_\lambda=\{h_{k_1,\ldots,k_\lambda}(\cdot)=h^{(1)}_{k_1}(\cdot)\|\ldots\| h^{(1)}_{k_\lambda}(\cdot):
\{0,1\}^{\leq\mu}\rightarrow\{0,1\}^{\lambda(w+1)}\}_{(k_1,\ldots,k_\lambda)\in\{0,1\}^{w\lambda}}.
$
This family is obtained by the concatenation of $\lambda \geq 1$ independent instances of functions from the family $\bH$.
The following statement holds true:
\begin{thm}\label{thm:3}
	$\bH^{(1)}_\lambda$ is an $\widetilde{\eps}_1$-AU$_2$ family with
	\begin{equation}
		\widetilde{\eps}_1 = \eps_1^\lambda =\left \lceil \mu/w \right \rceil^\lambda 2^{-w\lambda}.
	\end{equation}
\end{thm}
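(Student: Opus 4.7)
The plan is to show that a collision under the concatenated hash decomposes into a logical AND of $\lambda$ individual collision events, each governed by an independent key, and then to apply Theorem~\ref{thm:2} coordinate-wise.

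First, I would fix distinct messages $m \neq m' \in \{0,1\}^{\leq \mu}$ and unpack the collision condition. Since $h_{k_1,\ldots,k_\lambda}(\cdot) = h^{(1)}_{k_1}(\cdot) \| \cdots \| h^{(1)}_{k_\lambda}(\cdot)$ is a concatenation, the equality $h_{k_1,\ldots,k_\lambda}(m) = h_{k_1,\ldots,k_\lambda}(m')$ holds if and only if $h^{(1)}_{k_i}(m) = h^{(1)}_{k_i}(m')$ for every $i = 1, \ldots, \lambda$.

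Second, I would invoke the fact that the keys $k_1, \ldots, k_\lambda$ are drawn independently and uniformly from $\{0,1\}^w$. Then the joint collision event is the intersection of $\lambda$ events, each depending only on its own key $k_i$. By independence, the probability of the intersection factorizes into the product of the marginal probabilities. Applying Theorem~\ref{thm:2} to each marginal, each factor is bounded above by $\eps_1 = \lceil \mu/w \rceil 2^{-w}$, yielding
\begin{equation}
\Pr[h_{k_1,\ldots,k_\lambda}(m) = h_{k_1,\ldots,k_\lambda}(m')] \leq \eps_1^\lambda = \lceil \mu/w \rceil^\lambda 2^{-w\lambda}.
\end{equation}

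There is no serious obstacle here; this is essentially a routine independence argument. The only point that warrants care is the clean separation of randomness across coordinates: one should note explicitly that the map $k_i \mapsto (h^{(1)}_{k_i}(m), h^{(1)}_{k_i}(m'))$ depends only on $k_i$, so the $\lambda$ collision indicators are independent random variables, which legitimizes the product bound. Taking the supremum over distinct $m, m'$ then gives the claimed \eau{} parameter $\widetilde{\eps}_1 = \eps_1^\lambda$.
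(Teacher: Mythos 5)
Your proposal is correct and follows essentially the same route as the paper's own proof: decompose the collision under the concatenated hash into the conjunction of $\lambda$ coordinate-wise collision events, use independence of the keys $k_1,\ldots,k_\lambda$ to factorize the joint probability, and bound each factor by $\eps_1$ via Theorem~\ref{thm:2}. Your explicit remark that each collision indicator depends only on its own key $k_i$ is a small but welcome clarification of the independence step that the paper states more tersely.
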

See Appendix~\ref{apx:thm3} for the proof. 

We remind here that the idea behind this generalization is that one can decrease the collision probability down to the desired level without an increase in the employed ring modulus defined by the value of $w$.
Thus, it is possible to set $w=31$ or $w=63$ in order to perform all the calculations with 32 and 64-bit integers, correspondingly.

As the \xu{} family for our protocol, we use Toeplitz hashing~\cite{Krawczyk1994} given by multiplying a message with the Toeplitz matrix in the form:
\begin{equation}
	T_k := \begin{bmatrix}
	k_\beta & k_{\beta+1} & k_{\beta+2} & \ldots & k_{\beta+\alpha-1} \\
	k_{\beta-1} & k_\beta & k_{\beta+1} & \ldots & k_{\beta+\alpha-2} \\
	\vdots & \vdots & \vdots & \ddots & \vdots \\
	k_1 & k_2 & k_3 & \ldots & k_\alpha \\
\end{bmatrix}.
\end{equation}
It is defined by a binary string of the following form: $k=(k_1,\ldots,k_{\alpha+\beta-1})$.
Consider a family 
$
	\bH^{(2)}=\{h_k^{(2)}: \{0,1\}^\alpha\rightarrow\{0,1\}^\beta\}_{k\in \{0,1\}^{\alpha+\beta-1}}, 
$
with
\begin{equation}
	h_k^{(2)}(x)=T_k\cdot x~~({\rm mod}~2),
\end{equation}
where $x$ is treated as the column vector and $\cdot$ stands for the dot-product.
A precise statement is then as follows:
\begin{thm} \label{thm:4}
	$\bH^{(2)}$ is an XU$_2$ family.
\end{thm}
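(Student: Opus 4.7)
The plan is to exploit the $\mathbb{F}_2$-linearity of $T_k\cdot x$ in both $k$ and $x$ in order to reduce the \xu{} condition to a rank statement about a structured matrix. First, I would observe that for any distinct $x,x'\in\{0,1\}^\alpha$ and any $c\in\{0,1\}^\beta$,
\[
h_k^{(2)}(x)\oplus h_k^{(2)}(x')\;=\;T_k\cdot x\oplus T_k\cdot x'\;=\;T_k\cdot y,
\]
where $y:=x\oplus x'\neq 0$. It therefore suffices to prove that $\Pr_{k}[T_k\cdot y=c]=2^{-\beta}$ for every nonzero $y\in\{0,1\}^\alpha$ and every $c$, because then the \xu{} condition holds with $\eps=|\bT|^{-1}=2^{-\beta}$.

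Second, I would view $\Phi_y\colon k\mapsto T_k\cdot y$ as an $\mathbb{F}_2$-linear map from $\mathbb{F}_2^{\alpha+\beta-1}$ to $\mathbb{F}_2^\beta$; it is enough to show that $\Phi_y$ is surjective, since then every fiber has the same size $2^{\alpha-1}$ and the probability over uniform $k$ equals exactly $2^{\alpha-1}/2^{\alpha+\beta-1}=2^{-\beta}$. Reading off the formula for the $j$-th coordinate of $T_k\cdot y$, namely $\sum_{i=1}^\alpha k_{\beta-j+i}\,y_i$, the matrix $M_y$ of $\Phi_y$ is a $\beta\times(\alpha+\beta-1)$ array whose $j$-th row is the vector $(y_1,\ldots,y_\alpha)$ placed so that $y_1$ sits in column $\beta-j+1$, with zeros outside the window. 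As $j$ runs from $1$ to $\beta$, the window slides one column to the left per row, producing a ``staircase'' of shifted copies of $y$.

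Third---the step I expect to be the only delicate one---I would prove that $M_y$ has full row rank $\beta$. Let $i^{\star}:=\max\{i:y_i=1\}$, which exists because $y\neq 0$. Then the rightmost nonzero entry of row $j$ of $M_y$ is located in column $\beta-j+i^{\star}$, and these column indices are pairwise distinct (strictly decreasing in $j$). A short triangulation argument finishes the job: in any hypothetical vanishing combination $\sum c_j R_j=0$, letting $j_0$ be the smallest $j$ with $c_j\neq 0$ and examining column $\beta-j_0+i^{\star}$, any other row $R_j$ with $j>j_0$ would need $y_{i^{\star}+(j-j_0)}=1$, contradicting the maximality of $i^{\star}$. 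Hence only $R_{j_0}$ contributes to that column, forcing $c_{j_0}=0$. The main bookkeeping is verifying the index ranges so that no boundary case spoils the staircase; once this is in hand the rows are linearly independent, $\Phi_y$ is surjective, and the \xu{} property follows immediately.
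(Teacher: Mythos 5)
Your proof is correct, and it follows the same basic reduction as the paper: both start from $h^{(2)}_k(x)\oplus h^{(2)}_k(x')=T_k\cdot\Delta x$ with $\Delta x=x\oplus x'\neq 0$ and then exploit the staircase structure of the shifted copies of $\Delta x$. The difference lies in how the final count is closed. The paper anchors on the \emph{first} nonzero coordinate $j$ of $\Delta x$, writes out the $\beta$ coordinate equations, each owning a distinct pivot bit $k_{\beta+j-i}$ alongside higher-indexed key bits, and asserts that each equation holds with probability $1/2$ independently of the others; you anchor on the \emph{last} nonzero coordinate, prove that the $\beta\times(\alpha+\beta-1)$ matrix of the linear map $k\mapsto T_k\cdot\Delta x$ has full row rank, and conclude by uniform fiber counting that the probability equals $2^{-\beta}$ exactly. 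The two arguments are mirror images, but your rank/surjectivity formulation makes explicit the justification the paper compresses into the phrase ``independently from each other'' (the key bits occurring in different equations overlap, so the independence of those $\beta$ events is precisely the triangularity you verify); the paper's version is shorter, yours slightly more self-contained. Either way the bound is an equality, so $\bH^{(2)}$ is XU$_2$ and not merely almost XOR universal, as the theorem requires.
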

See Appendix~\ref{apx:thm4} for the proof. 

In order to combine Toeplitz hashing with polynomial hashing, we set $\alpha=\lambda(w+1)$ and $\beta=\tau$, where $\tau$ is the length of the final authentication tag.
Consequently, we obtain the authentication scheme, which is illustrated in Fig.~\ref{fig:scheme}.

\begin{figure}
	\centering
	\includegraphics[width=0.7\linewidth]{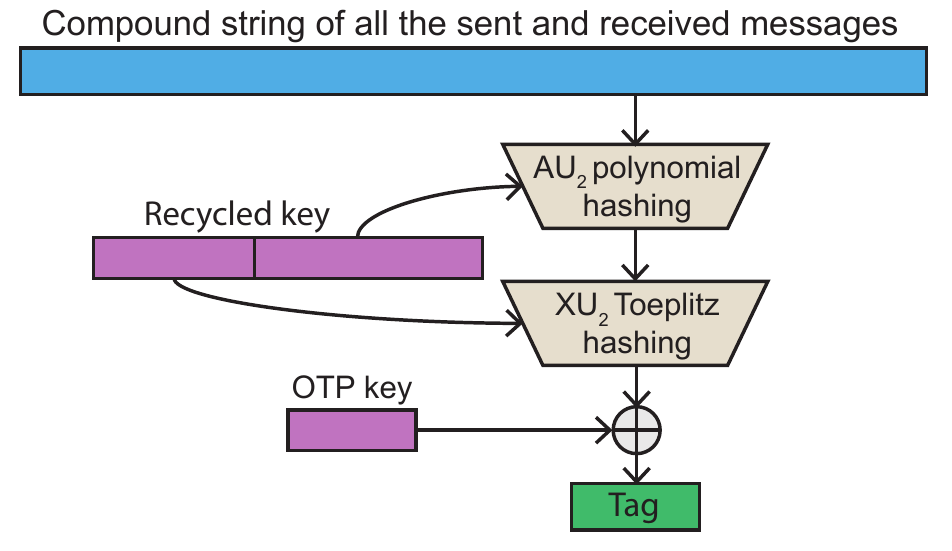}
	\caption{Scheme for the authentication tag calculation in the proposed lightweight authentication protocol for QKD.}
	\label{fig:scheme}
\end{figure}

The length of the recycled key combined from keys for defining $\bH^{(1)}_\lambda$ and $\bH^{(2)}$ elements is given by the following expression: 
\begin{equation}
	L_{\rm rec}=2\lambda w+\lambda+\tau-1,
\end{equation}
while the length of OTP key is given by $L_{\rm OTP}=\tau$.
The final security parameter is as follows:
\begin{equation}
	\eps=2^{-\tau}+\left \lceil \mu/w\right \rceil^{\lambda} 2^{-\lambda w}.
\end{equation}
It grows polynomially with the total size of messages in the classical channel during the post-processing round $\mu$.

Finally, we consider an optimal way of choosing parameters for the proposed authentication scheme.
We start with a given upper bound on tolerable authentication error probability $\eps_{\rm auth}$ and the maximal total length of messages $\mu$.
We also restrict ourselves to consideration only two cases of $w=31$ and $w=63$.

\begin{table}[]\centering
	\begin{tabular}{c|c|c|c|c|}
		\cline{2-5}
	& \multicolumn{2}{c|}{$w=31$} & \multicolumn{2}{c|}{$w=63$} \\ \hline
		\multicolumn{1}{|c|}{$\mu$, Mbits} & $L_{\rm rec}$, bits & $L_{\rm OTP}$, bits & $L_{\rm rec}$, bits & $L_{\rm OTP}$, bits \\ \hline
		\multicolumn{1}{|c|}{1}   & 229 & 40 & 166 & 40 \\ 
		\multicolumn{1}{|c|}{4}   & 291 & 40 & 166 & 40 \\ 
		\multicolumn{1}{|c|}{16}  & 291 & 40 & 166 & 40 \\ 
		\multicolumn{1}{|c|}{64}  & 354 & 40 & 293 & 40 \\ 
		\multicolumn{1}{|c|}{256} & 417 & 40 & 293 & 40 \\ \hline
	\end{tabular}
	\caption{Quantum key consumption on recycled key $L_{\rm rec}$ and OTP key $L_\otp$ for different bound on total length of classical messages transmitted during a single QKD round $\mu$. 
	The security parameter is fixed at the level $\eps_{\rm auth}=10^{-12}$.
	The corresponding Stinson bound [see Eq.~\eqref{eq:Stinson}], which determines a lower theoretical bound on $L_{\rm rec}$ for all presented values of $\mu$, is equal to 44 bits.}
	\label{tbl:performance}
\end{table}

In order to have an the smallest possible OTP key consumption, it is practical to set the following value:
\begin{equation}
	\tau := \lfloor -\log_2\eps_{\rm auth} \rfloor+1.
\end{equation}
The remaining part of $\eps_{\rm auth}$ could be used for AU$_2$ family, so we have to find a minimal possible integer $\lambda$ so that the following condition is fulfilled: 
\begin{equation}
	\lceil \mu/w \rceil^{\lambda} 2^{-\lambda w} \leq \eps_{\rm auth}-2^{-\tau}.
\end{equation}

The performance of the suggested scheme for different values of $\mu$ and $w\in\{31,63\}$ and the fixed value of the security parameter $\eps_{\rm auth}=10^{-12}$ is presented in Table~\ref{tbl:performance}.
We see that for the optimal size of the OTP key, the resulting length of the recycled key is quite low though it can be more than 10 times larger than the one from the bound~\eqref{eq:Stinson}.
Thus, the recycled key can be easily accumulated during the very first QKD round.

\begin{figure} \centering
	\includegraphics[width=1\linewidth]{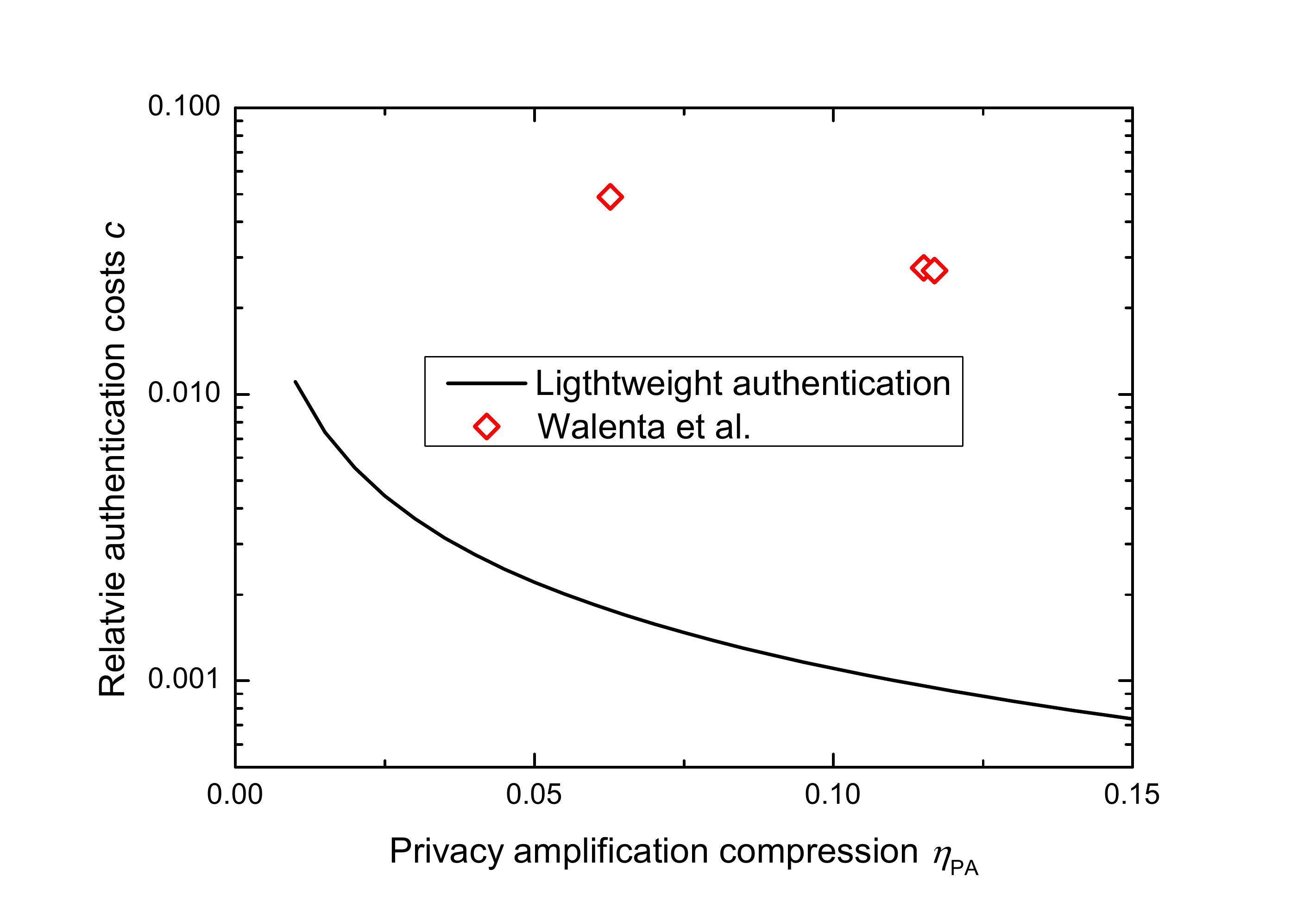}
	\caption{Comparison of the relative authentication costs reconstructed from experimental results presented if Ref.~\cite{Walenta2014} and corresponding values in the case of employing proposed lightweight authentication protocol.
	The security parameter is $\varepsilon_{\rm auth}=10^{-33}$.}
	\label{fig:comparison}
\end{figure}

We also provide a comparison of our approach performance with experimental results on the realization of a fast and versatile QKD system reported in Ref.~\cite{Walenta2014}.
In the considered QKD setup, each round of the post-processing procedure was executed for sifted key blocks of length $L_{\rm sift}=995,328$ bits.
For authentication purposes, the key recycling technique with ASU$_2$ family from Ref.~\cite{Bierbrauer1994} was used.
In the realized authentication scheme a 127-bit tag was generated for every $2^{20}$ bits of classical communication providing the authentication security parameter $\varepsilon_{\rm auth}=10^{-33}$.
As the main figure of merit for the authentication efficiency, we consider a relative authentication cost $c$ defined as a fraction of secret key consumed for authentication in the following rounds.
It has the following form:
\begin{equation}\label{cost}
	c = \frac{L_{\rm OTP}}{L_{\rm sec}} = \frac{\lfloor-\log_2\eps_{\rm auth} \rfloor+1}{L_{\rm sift}\cdot\eta_{\rm pa}}.
\end{equation}
Here $L_{\rm sec}$ is a length of a secret key produced after privacy amplification procedure, and $\eta_{\rm pa}$ is the privacy amplification compression coefficient ($L_{\rm sec}=L_{\rm sift}\cdot\eta_{\rm pa}$).

We present a comparison of the relative authentication costs calculated from the experimental data presented in Table~1 of Ref.~\cite{Walenta2014} with corresponding values for proposed lightweight authentication protocol given by Eq.~(\ref{cost}).
One can see that the lightweight authentication protocol reduces the relative authentication costs by $\approx 28$ times (see Fig.~\ref{fig:comparison}), and for real case scenarios, $c$ is less than 1\%.

We note that the comparison is provided for three points that are extracted from data of Ref.~\cite{Walenta2014} only. 
This is due to the fact that there is no possibility to correctly reconstruct the full dependence of the authentication costs since its size non-trivially depends on the observed level of the QBER.

\section{Security analysis of the key growing process} \label{sec:security}

Here we consider a security analysis of the full key-growing process, which employs the proposed ping-pong authentication protocol.
We first remind that the QKD-based key-growing process is based on the idea that an authentication key for the first QKD round is obtained from some pre-distribution scheme, 
while authentication in next QKD rounds is realized with the use of quantum-generated keys from previous rounds~\cite{BenOr2005}.
Considering the use of the ping-pong authentication scheme with key recycling, we should take into account two important points. 
The first is that the authentication key actually consists of two keys: recycled and OTP keys.
The second is that due to the ping-pong authenticity check pattern, the parties come to the shared pool of secret keys only in the second round, therefore the pre-distributed keys are used in both the first and the second rounds.
Thus, in our setup, we deal with two recycled keys: the pre-distributed one for the first two rounds and the quantum-generated recycled key for all next rounds.
We also have two pre-distributed OTP keys for the first two rounds, and a number of quantum generated OTP keys for authentication in all subsequent rounds.

Consider the general scheme of the \Nmax-round key growing based on the suggested ping-pong authentication scheme [see Fig.~\ref{fig:key_expansion}(a)].
In order to simplify our consideration, we assume that \Nmax~is even, so starting with authentication tag transmission from Alice to Bob in the first round, an authentication tag is transmitted from Bob to Alice in the $\Nmax^{\rm th}$ round.
A peculiar feature of our scheme is that the OTP key for authentication in the $i^{\rm th}$ round comes from the $(i-2)^{\rm th}$ round for $i>2$.
Then in the case of no tampering and no eavesdropping throughout all \Nmax~rounds, the parties also obtain as the output of the process a recycled key with two additional OTP keys, 
which can be used either for launching another key growing procedure or for any other purpose.

\begin{figure*}
	\begin{center}
		\includegraphics[height=0.4\linewidth]{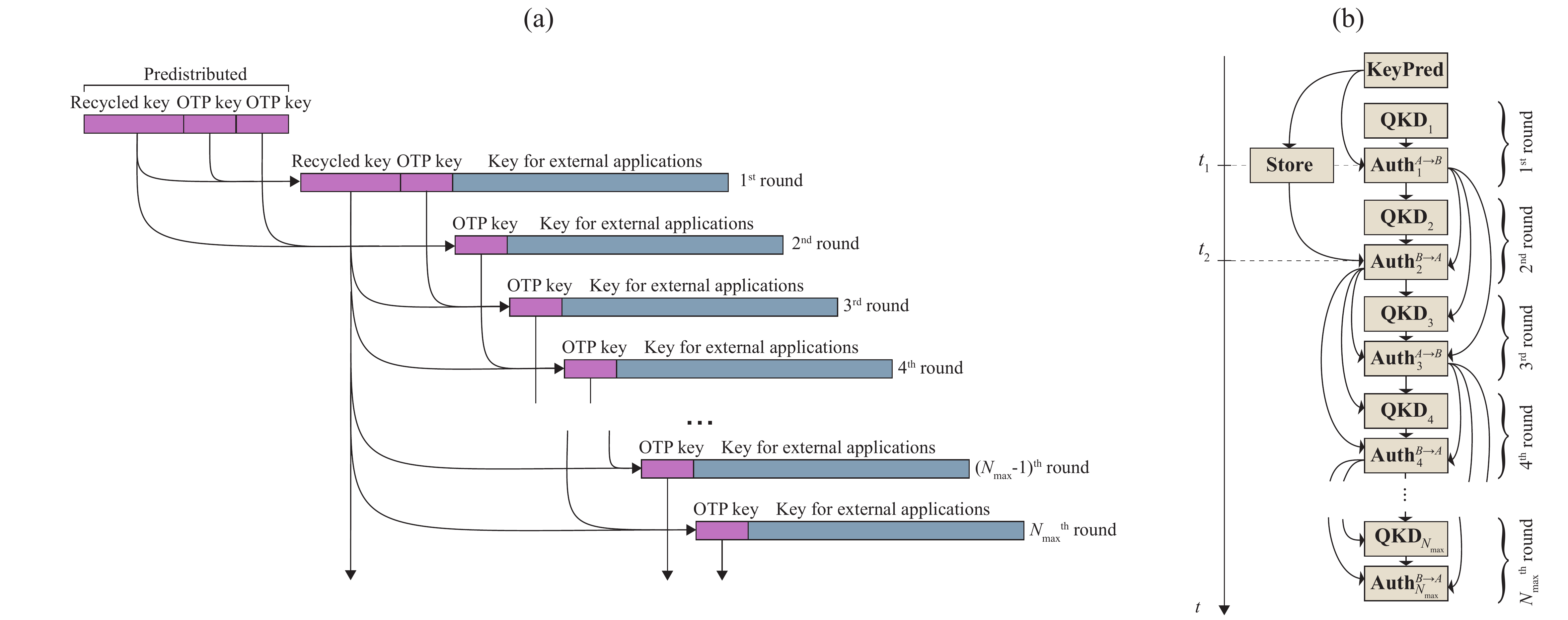}
	\end{center}
	\vskip-4mm
	\caption{
		In (a) the general key growing process with the use of the ping-pong lightweight authentication protocol is presented.
		In (b) the connections of input and outputs of procedures within the key growing process are depicted (see Table~\ref{tab:protocols} for more details).}
	\label{fig:key_expansion}
\end{figure*}

We then describe a general strategy of the security proof of the resulting scheme with skipping technical details.
For complete formal security analysis, we refer the reader to the Appendix~\ref{apx:security_analysis}.

We consider the key growing process as a sequence of procedures.
Each of the procedures is considered as a box with some input and outputs, such that outputs of one box can serve as inputs to the other box [see Fig.~\ref{fig:key_expansion}(b)].
Each of the procedures is considered to have two modifications.
The `\emph{real}' procedure takes place in real-world implementations of the key growing, so it suffers from technical imperfections and has some probability not to cope with its task. 
In contrast, the `\emph{ideal}' procedure copes with its task perfectly, however, it is only a useful abstraction.
We specify a modification of a procedure by a superscript `real' or `ideal' where it is required.

The security proof is based on showing that the keys generated by a sequence of real procedures are indistinguishable from the keys generated by the sequence of ideal procedures up to some small value given by the security parameter.
Since our analysis encounters both classical and quantum systems, we employ a general formalism of quantum states (see Appendix~\ref{apx:braket}) to describe outputs of the procedures.
For quantifying a difference between outputs $\rho_1$ and $\rho_2$, which are semi-positive unit trace linear operators, we use the trace distance of the following form:
\begin{equation}
	\Dist{\rho_1}{\rho_2} := \frac{1}{2}\|\rho_1-\rho_2\|_1,
\end{equation}
where $\|\cdot\|_1$ stands for the trace norm.
It can be interpreted as the maximal possible achievable advantage of distinguishing states $\rho_1$ and $\rho_2$. 
We note that the trace distances reduces to statistical (or total variation) distance if $\rho_1$ and $\rho_2$ describe states of classical systems (random variables).
We describe below ideas of each of the procedures that make up the key growing process.

The $\KeyPred$ procedure performs a pre-distribution of secret key for authentication purposes in the first two rounds.
Remind that it generates an $\Lrec$-bit recycled key and two $\LOTP$-bit OTP keys.
As the output of the procedure, we consider a state of registers storing generated keys on Alice's and Bob's side, and some physical system owned by Eve. 
In the case of $\KeyPred\ideal$ the values of  Alice's and Bob's keys are exactly the same and uniformly distributed over the corresponding key spaces.
Meanwhile, the state of Eve's system is factorized with respect to the state of Alice's and Bob's systems.
It means that Eve knows absolutely nothing about the generated keys.

The output of the real protocol $\KeyPred\real$ is different from the ideal one since some correlations between Eve's system and the pre-distributed keys appear or there can be deviations from a uniform distribution.
Moreover, the output state of the real protocols can depend on time $t$.
This time dependence is intended to capture the fact that Eve can obtain some information about pre-distributed keys with time.
E.g. in the case of the key pre-distribution with classical Diffie--Hellman algorithm~\cite{DH1976}, Eve can wait for a scalable quantum computer and compromise keys by employing the Shor's algorithm~\cite{Shor1997}.
What we assume in our setting is that 
\begin{equation} \label{eq:eps_pred}
	\Dist{\rho_{\rm pred}^{\rm ideal}}{\rho_{\rm pred}^{\rm real}(t_1)}\leq \eps_{\rm pred},
\end{equation}
where $\rho_{\rm pred}^{\rm ideal}$ is the output of the $\KeyPred\ideal$, $\rho_{\rm pred}^{\rm real}(t_1)$ of the $\KeyPred\real$ at the moment $t_1$, 
which is a moment of running the first authenticity check procedure \AuthAB{1}, and $\eps_{\rm pred}$ is some positive real quantity.
Unfortunately, as far as the authors are aware, there is no way to determine $\eps_{\rm pred}$ for traditional (non-quantum) key distribution protocols.
One can only believe that the employed key pre-distribution protocol is good enough to have $\eps_{\rm pred}$ to be less or at least comparable with other $\eps$-s considered further.

The \Store~procedure is introduced to capture the fact that the security of pre-distributed keys generated for the second round may degrade from the time $t_1$ to the moment $t_2$ of the second authentication check procedure \AuthBA{2}.
Its input consists of the pre-distributed recycled keys and the OTP keys for the second round, while the  output consists of the same keys to which is added the state of Eve's system.
In the ideal case , the state of the keys remains the same, and the state of Eve's system does not depend on the values of keys.
In the real case the output state can be different but we assume that 
\begin{equation} \label{eq:eps_store}
	\Dist{\rho^{\rm ideal}_{\rm store}}{\rho^{\rm real}_{\rm store}(t_1,t_2)}\leq \eps_{\rm store},
\end{equation}
with some positive real $\eps_{\rm store}$, where $\rho^{\rm ideal}_{\rm store}$ and $\rho^{\rm real}_{\rm store}(t_1,t_2)$ are outputs of $\Store\ideal$ and $\Store\real$  at the moment $t=t_2$ correspondingly.
We note the output the real procedure depends both on times $t_1$ and $t_2$.

Let us describe a QKD round. 
We split it into two procedures. 
The first procedure is a standard QKD protocol denoted by \QKD{i} which is implemented without any authenticity check.
By this, we mean that all classical communication in a post-processing stage is realized with a public channel and can be potentially tampered.
The output of \QKD{i} includes the generated keys on Alice's and Bob's sides and some physical system owned by Eve.
Due to potential tampering, we call the keys generated in \QKD{i} unverified.
To perform the authenticity check \QKD{i} also outputs resulting logs of classical communication to the second procedure of the QKD round: \AuthAB{i} for odd $i$ and \AuthBA{i} for even $i$.
In these procedures, an authentication tag related to the log is generated, transmitted and verified.
The superscript in the procedure name specifies a direction of the authentication tag transfer.
We refer to the party who generates and transmits the tag as \emph{tag sender}, and to the party who receives and checks the tag as \emph{tag verifier}.

The input of the authenticity check procedure also includes the number of unverified quantum generated keys on the tag verifier's side.
If the verifier does not receive a message with the authentication tag during a pre-specified period of time detects tampering or the received tag is incorrect, then the verifier discards the input unverified keys and terminates the key growing process.
The key growing process is also terminated if the secure key distillation within \QKD{i} failed (e.g. due to the high value of the QBER).
Otherwise, the verifying party turns the unverified keys into verified and continue the process.

In contrast to the key pre-distribution and storing procedure, in the difference between real and ideal instantiations of $\QKD{i}$ and $\AuthAB{i}$ ($\AuthBA{i}$) procedure can be accurately quantified.
Let $\rho^{\rm ideal}_{{\rm QKD}(i)}$ and $\rho^{\rm real}_{{\rm QKD}(i)}$ be the outputs of $\QKD{i}\ideal$ and $\QKD{i}\real$ correspondingly.
Then using existing security proofs for standard QKD protocols (see e.g.~\cite{Tomamichel2017}) we can introduce the following inequality
\begin{equation} \label{eq:eps_QKD}
	\Dist{\rho^{\rm ideal}_{{\rm QKD}(i)}}{\rho^{\rm real}_{{\rm QKD}(i)}} \leq \eps_{\rm QKD},
\end{equation}
where the value of positive real $\eps_{\rm QKD}$ is determined by parameters of the QKD post-processing.
Usually, $\eps_{\rm QKD}$ consists of several terms related to information reconciliation, parameter estimation, and privacy amplification steps and are in the order of $10^{-12}-10^{-9}$ (see e.g.~\cite{Walenta2014,Kiktenko2016}).
A similar situation appears for the outputs of authentication check procedures.
Employing universal composability of authentication with key recycling based on $\eps_{\rm auth}$-ASU$_2$ family proven in Ref.~\cite{Portmann2014}, we argue that
\begin{equation} \label{eq:eps_auth}
	\Dist{\rho^{\rm ideal}_{{\rm auth}(i)}}{\rho^{\rm real}_{{\rm auth}(i)}} \leq \eps_{\rm auth},
\end{equation}
where $\rho^{\rm ideal}_{{\rm auth}(i)}$ and $\rho^{\rm real}_{{\rm auth}(i)}$ are outputs of \AuthABi{i} and \AuthABr{i} (\AuthBAi{i} and \AuthBAr{i}) for odd (even) $i$ correspondingly.
We note that in order to rely on the results obtained in Ref.~\cite{Portmann2014} one have to be sure authentication keys used for tag generation and tag verification are ideal.
We prove a special Lemma in Appendix~\ref{apx:authcheck} which enables us to rely on Eq.~\eqref{eq:eps_auth} in the security proof.

Now we are ready to formulate the main security theorem.
\begin{thm} \label{thm:mainsec}
	Let $\rho^{\Nmax,{\rm ideal}}_{\rm total}$ and $\rho^{\Nmax,{\rm real}}_{\rm total}$ be output states generated in $\Nmax$-round key growing process consisted of all ideal and of all real procedures correspondingly.
	Then the following inequality holds:
	\begin{multline}
		\Dist{\rho^{\Nmax,{\rm ideal}}_{\rm total}}{\rho^{\Nmax,{\rm real}}_{\rm total}}  \\ \leq 
		\eps_{\rm pred} + \eps_{\rm store} + \Nmax(\eps_{\rm auth}+\eps_{\rm QKD}).
	\end{multline}
\end{thm}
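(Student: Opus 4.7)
The plan is a standard hybrid argument between the fully ideal and fully real executions of the $\Nmax$-round growing process. A single execution consists of $2+2\Nmax$ elementary procedures laid out sequentially in time: \KeyPred, \Store, and then for each round $i=1,\ldots,\Nmax$ one $\QKD{i}$ followed by one authenticity check (\AuthAB{i} for odd $i$, \AuthBA{i} for even $i$). I would introduce hybrid states $\rho^{(0)},\rho^{(1)},\ldots,\rho^{(2+2\Nmax)}$ with $\rho^{(0)}=\rho^{\Nmax,{\rm ideal}}_{\rm total}$ and $\rho^{(2+2\Nmax)}=\rho^{\Nmax,{\rm real}}_{\rm total}$, such that consecutive hybrids differ only by swapping a single procedure from its ideal to its real version while the upstream procedures remain ideal and the downstream procedures remain real.

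The argument then reduces to two standard facts. By the triangle inequality for the trace distance, $\Dist{\rho^{(0)}}{\rho^{(2+2\Nmax)}}\le\sum_{j=1}^{2+2\Nmax}\Dist{\rho^{(j-1)}}{\rho^{(j)}}$. Because the trace distance is contractive under CPTP maps, the contribution of the $j$-th swap is bounded by the trace distance between the two versions of that one procedure acting on identical (ideal) inputs, before the output is further processed by the downstream real procedures. Plugging in Eqs.~\eqref{eq:eps_pred}, \eqref{eq:eps_store}, \eqref{eq:eps_QKD} and \eqref{eq:eps_auth} contributes $\eps_{\rm pred}$ from the \KeyPred swap, $\eps_{\rm store}$ from the \Store swap, $\eps_{\rm QKD}$ from each of the $\Nmax$ $\QKD{i}$ swaps, and $\eps_{\rm auth}$ from each of the $\Nmax$ authenticity check swaps; summing yields exactly the bound claimed in the theorem.

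The main subtlety, and the step that will require the bulk of the formal work, is applying Eq.~\eqref{eq:eps_auth} inside the hybrid, since the composable security of Ref.~\cite{Portmann2014} is stated under the assumption that the recycled key and the OTP key feeding the authentication procedure are ideal (uniform and product with Eve's side information). This fixes the hybrid ordering: when I flip an authenticity check for round $i$ from ideal to real, all earlier procedures producing its authentication keys (\KeyPred and \Store for $i\in\{1,2\}$, and the relevant $\QKD{i'}$ with $i'<i$ for $i\ge 3$) must still be in their ideal version, which is guaranteed by the chronological ordering of the hybrids. The Lemma announced in Appendix~\ref{apx:authcheck} formalizes this passage and legitimizes applying Eq.~\eqref{eq:eps_auth} directly on the hybrid, after which contractivity closes the argument.
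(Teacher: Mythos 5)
Your proposal is correct and follows essentially the same route as the paper's proof: the paper likewise builds the chain of prefix-ideal hybrids (replacing \KeyPred/\Store and then each round's \QKD{i} and authenticity check in chronological order, see Fig.~\ref{fig:chain}), applies the triangle inequality with the bounds~\eqref{eq:eps_pred}--\eqref{eq:eps_auth}, and justifies invoking Eq.~\eqref{eq:eps_auth} through Lemma~\ref{lemma:1} exactly because the procedures supplying the authentication keys are still ideal at each swap. Your only deviations are cosmetic: you flip one procedure per hybrid instead of the paper's grouped steps, and you traverse the same chain in the opposite direction, which is immaterial since the trace distance is symmetric.
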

See Appendix~\ref{apx:mainsec} for the proof.

The idea of the proof is base on considering the key growing process consisted of all real procedures, consistently replacing real procedures with ideal ones, and calculating distances between appearing intermediate states taking into account~\eqref{eq:eps_pred}-\eqref{eq:eps_auth} (see Fig.~\ref{fig:chain}).
Finally, we note that the security parameter increases linearly with the number of rounds $\Nmax$ that is quite expected for the universal composability framework.

\begin{figure*}
	\centering
	\includegraphics[width=0.8\linewidth]{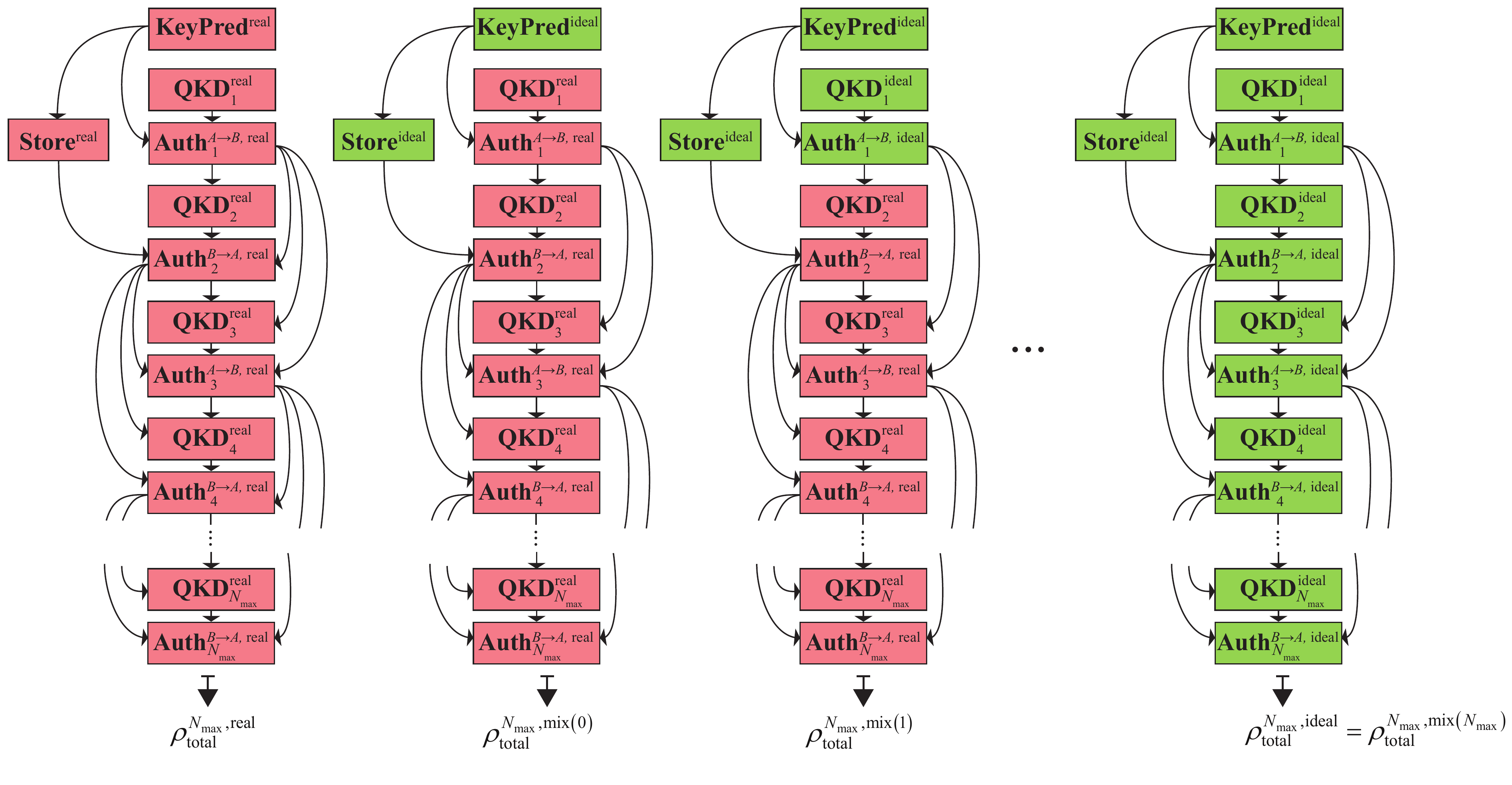}
	\caption{
		Illustration of the basic idea behind the security proof.
		The real procedures are consistently replaced with ideal ones.
		The distance between the states generated in entirely real and entirely ideal processes is obtained as a sum of the distances between states generated in intermediate processes.
	}
	\label{fig:chain}
\end{figure*}

\section{Conclusion and outlook}\label{sec:conclusion}

In this work, we have developed the novel lightweight authentication protocol for QKD systems. 
We have described the practical unconditionally secure authentication scheme for QKD systems, which is based on combining two basic ideas: 
(i) the proposed ping-pong scheme, which allows generating only a single authentication tag during a QKD post-processing round, and (ii) unconditionally secure tag generation based AXU$_2$ family with OTP encryption, 
which allows further reducing key consumption by using key recycling.
We also have demonstrated how to construct suitable AXU$_2$ families using the combination of polynomial AU$_2$ and Toeplitz XU$_2$ families, and
obtained the scheme which minimizes the OTP key consumption for given authentication error bound $\eps_{\rm auth}$.
Finally, we have provided security proof for the key growing process based on our approach.

The proposed scheme is promising for industrial QKD setups due to the possible enhancement of an effective secret key generation rate.Actually, the scheme can be used in other cryptosystems. However, apart from the QKD, the information theoretically-secure authentication is usually overkill. We note that employing the proposed approach also implies the use of one-time-pad encryption to get information-theoretic security all along.

Another interesting area, where the considered approach could be implemented, is QKD networks.
The idea is that in order to generate unconditionally secure secret keys between two parties in a QKD network using a trusted node scheme, all classical communication should be authenticated in an unconditionally secure manner.
For that purpose, a considered implementation of the key recycling approach could be employed.
The source code for a proof-of-principle realization of the suggested authentication protocol is available upon reasonable requests~\cite{Code}.

\section*{Acknowledgment}
The authors would like to thank A.S. Trushechkin and J.-{\AA}. Larsson for fruitful discussions and useful comments.
The work was supported by the Russian Foundation for Basic Research (18-37-20033).

\section{Proofs of the theorems of Sec.~II and III}
\subsection{Proof of the Theorem~\ref{thm:1}}\label{apx:thm1}
\begin{proof}
	Consider two distinct element $m$ and $m'$ from $\bM$.
	We need to prove that for any $c\in\{0,1\}^\tau$, the following relation holds: 
	\begin{equation}~\label{eq:xorun}
	\underset{
		\substack{
			k_1\overset{\$}{\leftarrow}\bK_1\\    
			k_2\overset{\$}{\leftarrow}\bK_2
		}
	}\Pr[h_{k_1,k_2}(m)\oplus h_{k_1,k_2}(m')=c]\leq \eps_1+\eps_2.
	\end{equation}
	
	First, consider $c=(0,\ldots,0)$.
	Then the condition 
	$
	h_{k_1,k_2}(m)\oplus h_{k_1,k_2}(m')=c
	$
	reduces to 
	$
	h_{k_1,k_2}(m)= h_{k_1,k_2}(m'). 
	$
	The LHS of Eq.~\eqref{eq:xorun} can be written as follows:
	\begin{multline}\label{eq:xorun2}
	\underset{
		k_1\overset{\$}{\leftarrow}\bK_1
	}
	\Pr[h^{(1)}_{k_1}(m)= h^{(1)}_{k_1}(m')]
	+
	\underset{
		k_2\overset{\$}{\leftarrow}\bK_2    
	}\Pr[h^{(2)}_{k_2}(h^{(1)}_{k_1}(m))=\\
h^{(2)}_{k_2}(h^{(1)}_{k_1}(m'))
	| h^{(1)}_{k_1}(m)\neq h^{(1)}_{k_1}(m')
	]
	\times\\\underset{
		k_1\overset{\$}{\leftarrow}\bK_1
	}
	\Pr[h^{(1)}_{k_1}(m)\neq h^{(1)}_{k_1}(m')].
	\end{multline}
	The first term in Eq.~\eqref{eq:xorun} is bounded by $\eps_1$ according to the definition of an $\eps_1$-AU$_2$ family, while the second term can be bounded by $\eps_2$ according to the definition of an $\eps_2$-XU$_2$ family.
	Thus, the final upper bound is given by $\eps_1+\eps_2$.
	
	Then consider $c\neq(0,\ldots,0)$.
	One can see that in order to have
	$
	h_{k_1,k_2}(m)\oplus h_{k_1,k_2}(m')=c 
	$
	it is necessary to have 
	$
	h^{(1)}_{k_1}(m)\neq h^{(1)}_{k_1}(m').
	$
	Then we can rewrite LHS of Eq.~\eqref{eq:xorun} in the following form:
	\begin{multline}\label{eq:xorun3}
	\underset{
		k_2\overset{\$}{\leftarrow}\bK_2    
	}\Pr[h^{(2)}_{k_2}(h^{(1)}_{k_1}(m))\oplus h^{(2)}_{k_2}(h^{(1)}_{k_1}(m'))=c |\\
	h^{(1)}_{k_1}(m)\neq h^{(1)}_{k_1}(m')
	]=
	\underset{
		k_1\overset{\$}{\leftarrow}\bK_1
	}
	\Pr[h^{(1)}_{k_1}(m)\neq h^{(1)}_{k_1}(m')].
	\end{multline}
	One can see that the value of Eq.~\eqref{eq:xorun3} is bounded by the value of $\eps_2$ according to the definition of an $\eps_2$-XU$_2$ family, that completes the proof.
\end{proof}

\subsection{Proof of the Theorem~\ref{thm:2}}\label{apx:thm2}
\begin{proof}
	Consider two distinct input bit strings $m$ and $m'$ of length not larger than $\mu$.
	The condition on the collision event is as follows:
	\begin{equation}\label{eq:polynomial}
	\sum_{i=1}^{l}(\textsf{int}(m_i)-\textsf{int}(m_i'))\textsf{int}(k)^{i-1} = 0 ~~({\rm mod}~p),
	\end{equation}
	where $\{m_i\}_{i=1}^l$ and  $\{m_i'\}_{i=1}^l$ are $w$-bit chunks of $m$ and $m'$ correspondingly and $l:=\lceil (\mu+1)/w\rceil$.
	Since $m\neq m'$, there is at least one $m_i\neq m_i'$, so the LHS of Eq.~\eqref{eq:polynomial} can be considered as a non-zero polynomial of the degree that is not larger than $l-1$.
	The collision corresponds to the case, where the  random $k$ taken from $\{0,1\}^w$ turns out to be a root of this polynomial.
	Due to the fact that $(l-1)$-degree polynomial has no more than $l-1$ roots, the collision probability is upper bounded by the following value:
	\begin{equation}
	\frac{l-1}{|\mathcal{K}|}\leq \lceil \mu/w \rceil2^{-w},
	\end{equation}	
	since $\lceil (\mu+1)/w\rceil-1\leq \lceil \mu/w \rceil$.
\end{proof}

\subsection{Proof of the Theorem~\ref{thm:3}} \label{apx:thm3}
\begin{proof}
	Consider two distinct input strings $m$ and $m'$ of length not larger than $\mu$.
	The condition of the identity of 
	\begin{equation}
	h_{k_1,\ldots,k_\lambda}(m)=h_{k_1,\ldots,k_\lambda}(m')
	\end{equation}	 
	is equivalent to a set of identities $h^{(1)}_{k_i}(m)=h^{(1)}_{k_i}(m')$ for all $i=1,\ldots,\lambda$.
	Since all $k_i$ are independent, we obtain
	\begin{multline}
	\underset{(k_1,\ldots,k_\lambda)\overset{\$}{\leftarrow}\{0,1\}^{w\lambda}}{\Pr}
	[h_{k_1,\ldots,k_\lambda}(m)=h_{k_1,\ldots,k_\lambda}(m')]\\=
	\prod_{i=1}^{\lambda}
	\underset{k_i\overset{\$}{\leftarrow}\{0,1\}^{w}}{\Pr}[h_{k_i}(m)=h_{k_i}(m')]\leq\eps_2^\lambda
	\end{multline}
	due to results of the Theorem~\ref{thm:2}.	
\end{proof}

\subsection{Proof of the Theorem~\ref{thm:4}}\label{apx:thm4}
\begin{proof}
	Consider two distinct binary vectors $x$ and $x'$ from $\{0,1\}^\alpha$ and some binary vector $c$ from $\{0,1\}^\beta$.
	Let $\Delta x = x\oplus x'$.
	Denote by $j$ the index of the first nonzero element of $\Delta x$ (we note that $j$ always exists since $x\neq x'$).
	The condition 
	$
	h_k(x)\oplus h_k(x')=c 
	$
	can be rewritten in the following form:
	\begin{equation}\label{eq:Tpl1}
	T_k\cdot \Delta x = c.
	\end{equation}
	One can rewrite Eq.~\eqref{eq:Tpl1} as follows:
	\begin{equation}
	k_{\beta+j-i}+\sum_{\gamma=j+1}^{\alpha}k_{\beta+j-i+\gamma}\Delta x_\gamma=c_i \text{ for } i=1,\ldots,\beta.
	\end{equation}
	Since all the elements of $k$ are independent random bits, it easy to see that for each $i=1,\ldots,\beta$
	\begin{equation}
	\underset{k\overset{\$}{\leftarrow}\bK}{\Pr}[
	k_{\beta+j-i}+\sum_{\gamma=j+1}^{\alpha}k_{\beta+j-i+\gamma}\Delta x_\gamma=c_i ]=\frac{1}{2}
	\end{equation}
	independently from each other.
	Thus, the total probability $\underset{k\overset{\$}{\leftarrow}\bK}{\Pr}[T_k\cdot\Delta x=c]$ is given by $2^{-\beta}$.
\end{proof}

\section{Security proof of the key-growing QKD-based process based on the lightweight authentication scheme} \label{apx:security_analysis}

\newcommand{\bra}[2]{\langle #1 |_{#2}}

\subsection{Basic notations} \label{apx:braket}

In order to to deal with both classical and quantum systems, we use the standard quantum-mechanical formalism.
Consider a (classical) random variable $X$ which takes values from a set $\mathcal{X}$.
Let $\mathbb{H}_{\mathcal{X}}$ be a $|\mathcal{X}|$-dimensional Hilbert space such that every element $x\in\mathcal{X}$ corresponds to a unit basis vector $\ket{x}{X}$ of $\mathbb{H}_{\mathcal{X}}$.
One can think about $\ket{x}{X}$ as column (`ket') vector.
Let $\bra{x}{X}:=\ket{x}{X}^{\dagger}$, where $\dagger$ stands for the Hermitian conjugate (transposition and complex conjugate), be a corresponding row (`bra') vector.

Then the quantum-mechanical description of a random state of the classical variable $X$ takes the following form:
\begin{equation}
	\rho_X:=\sum_{x\in \mathcal{X}}\Pr[X=x]\ketbra{x}{X},
\end{equation}
where $\ketbra{x}{X} \equiv \ket{x}{X}\cdot\bra{x}{X}$.
One can see that $\rho_X$ is the unit trace semi-positive linear operator.

The state of multiple random variables is obtained by the tensor product operation, e.g. a random state of two variables $X$ and $Y$ takes the form $$\rho_{XY}:=\sum_{x\in\mathcal{X},y\in\mathcal{Y}}\Pr[X=x,Y=y]\ketbra{x}{X}\otimes\ketbra{y}{Y},$$
where $\mathcal{Y}$ is a set of possible values of $Y$.

We also introduce some short-hand notations for some common states.
Let 
\begin{equation}
	\sigma^{(l)}_{XY} := \frac{1}{2^l}\sum_{x\in\{0,1\}^l}\ketbra{x}{X}\otimes\ketbra{x}{Y}
\end{equation}
denote a state of two registers $X$ and $Y$ storing a pair of ideal $l$-bit keys.
Let 
\begin{equation}
	\sigma^{(l)}_{X} := \frac{1}{2^l}\sum_{x\in\{0,1\}^l}\ketbra{x}{X}
\end{equation}
be a uniformly-random state of single $l$-bit register $X$.
We also introduce two special symbols $\bot$ and $\acc$, and a state 
\begin{equation}
	\sigma^{\upsilon}_{X_1\ldots X_n}:=\ketbra{\upsilon}{X_1}\otimes\ldots\otimes\ketbra{\upsilon}{X_n}, \quad \upsilon\in\{\bot,\acc\},
\end{equation}
where $\ket{\bot}{X_i}$ and $\ket{\acc}{X_i}$ are two orthogonal normalized states.
Further, we use $\bot$ for indicating errors and $\acc$ for indicating an authentication check passing.

\subsection{Input and outputs of procedures}

We list all inputs and outputs of all procedures employed in the key growing process in Table~\ref{tab:protocols}. 
The descriptions of all variables are summarized in Table~\ref{tab:defs}.
We note that in our consideration the real and ideal modifications have the same inputs and outputs.

\begin{table*}
	\begin{tabular}{|p{0.11\linewidth}|p{0.4\linewidth}|p{0.42\linewidth}|}
		\hline
		Procedure	& 	Input	&	Output \\ \hline	
		\KeyPred	& None	& 
		\KrecApred, \KrecBpred, \KOTPApred{1}, \KOTPBpred{1}, \KOTPApred{2}, \KOTPBpred{2}, 
		$\rho_{E|\KrecApred\KrecBpred\KOTPApred{1}\KOTPBpred{1}\KOTPApred{2}\KOTPBpred{2}}$ \\ \hline
		
		\QKD{1} & None &
		\Akeyu{i}, \Bkeyu{i}, \KOTPAu{i+2}, \KOTPBu{i+2}, \MA{i}, \MB{i}, \KrecAu, \KrecBu, \newline
		$\rho_{E|\Akeyu{1}\Bkeyu{1}\KOTPAu{i+2}\KOTPBu{i+2}\MA{i}\MB{i}\KrecAu\KrecBu}$  \\\hline
		
		\QKD{i}, \newline $i\geq 2$ & $V_{i-2}$, $V_{i-1}$, ($V_0:=\acc$) &
		\Akeyu{i}, \Bkeyu{i}, \KOTPAu{i+2}, \KOTPBu{i+2}, \MA{i}, \MB{i}, \newline
		$\rho_{E|\Akey{i}\Bkeyu{i}\KOTPAu{i+2}\KOTPBu{i+2}\MA{i}\MB{i}}$  \\\hline
		
		\AuthAB{1} & \KrecApred, \KrecBpred, \KOTPApred{1}, \KOTPBpred{1}, \MA{1}, \MB{1},
		\Bkeyu{1}, \KrecBu, \KOTPBu{3} &
		$V_1$, \Bkey{1}, \KrecB, \KOTPB{3}	\\ \hline
		
		\AuthBA{2} & $V_{1}$, \KrecApredst, \KrecBpredst, \KOTPApredst{2}, \KOTPBpredst{2}, \MA{2}, \MB{2},
		\Akeyu{1}, \KrecAu, \KOTPAu{3}, \Akeyu{2}, \KOTPAu{4} &
		$V_2$, \Akey{1}, \KrecA, \KOTPA{3}, \Akey{2}, \KOTPA{4}	\\ \hline
		
		\AuthAB{i}, $i=2j+1$, \newline $j\geq2$ & $V_{i-1}$, \KrecA, \KrecB, \KOTPA{i}, \KOTPB{i}, \MA{i}, \MB{i},
		\Bkeyu{i-1}, \KOTPBu{i+1}, \Bkeyu{i}, \KOTPBu{i+2} &
		$V_i$, \KrecA, \KrecB, \Bkey{i-1}, \KOTPB{i+1}, \Bkey{i}, \KOTPB{i+2}	\\ \hline
		
		\AuthBA{i}, $i=2j, j\geq2$ & $V_{i-1}$, \KrecA, \KrecB, \KOTPA{i}, \KOTPB{i}, \MA{i}, \MB{i},
		\Akeyu{i-1}, \KOTPAu{i+1}, \Akeyu{i}, \KOTPAu{i+2} &
		$V_i$, \KrecA, \KrecB, \Akey{i-1}, \KOTPA{i+1}, \Akey{i}, \KOTPA{i+2}	\\ \hline
		
		\Store & \KrecApred, \KrecBpred, \KOTPApred{2}, \KOTPBpred{2} &	\KrecApredst, \KrecBpredst, \KOTPApredst{2}, \KOTPBpredst{2}, \newline
		$\rho_{E|\KrecApredst\KrecBpredst\KOTPApredst{2}\KOTPBpredst{2}}$ \\ \hline
	\end{tabular}
	\caption{List of all inputs and outputs of procedures used in the key growing process.  
		For descriptions of variables see Table~\ref{tab:defs}.}
	\label{tab:protocols}
\end{table*}

\begin{table*}
	\begin{tabular}{|p{0.16\linewidth}|p{0.8\linewidth}|}
		\hline
		Variable & Description \\ \hline
		\KrecApred (\KrecBpred) & Alice's (Bob's) pre-distributed recycled key for authentication in the first two rounds. \\
		\KrecApredst (\KrecBpredst) & Alice's (Bob's) pre-distributed recycled key for authentication in the second round after the storage period. \\
		\KrecA (\KrecB) & Alice's (Bob's) quantum generated recycled key for authentication in all rounds after the second. \\
		\KOTPApred{i} (\KOTPBpred{i}) & Alice's (Bob's) pre-distributed OTP key for authentication in the $i^{\rm th}$ round. \\
		\KOTPApredst{2} (\KOTPBpredst{2}) & Alice's (Bob's) pre-distributed OTP key for authentication in the $2^{\rm nd}$ round after the storage period. \\
		$V_{i}$ & Verification flag of the authenticity check obtained in $i^{\rm th}$ round by Bob (Alice) for odd (even) round number $i$. $V_{i}=\acc$ if the authenticity check passed, and $V_{i}=\bot$ -- otherwise.\\
		\Akeyu{i} (\Bkeyu{i}) & Alice's (Bob's) unverified quantum generated key in $i^{\rm th}$ round for external applications. \\
		\Akey{i} (\Bkey{i}) & Alice's (Bob's) verified quantum generated key from $i^{\rm th}$ round for external applications. \\
		\MA{i} (\MB{i}) & Compound string of the received and transferred (transferred and received) classical messages during post-processing in $i^{\rm th}$ round on Alice's (Bob's) side for odd $i$ and Bob's (Alice's) side for even $i$. \\
		$\rho_{E|X_1X_2\ldots}$ & Quantum state of Eve's system of arbitrary physical  nature conditioned by values of $X_1, X_2, \ldots $ \\\hline
	\end{tabular}
	\caption{Descriptions of variables used in the key growing process.}
	\label{tab:defs}
\end{table*}

\subsection{Formal description of procedures}
\subsubsection{\KeyPred}
This procedure performs a pre-distribution of secret key for authentication purposes in the first two rounds.
In the ideal case its output is given by
\begin{equation}
	\rho_{\rm pred}^{\rm ideal} := \sigma^{(\Lrec)}_{\KrecApred\KrecBpred}\otimes
	\sigma^{(\LOTP)}_{\KOTPApred{1}\KOTPBpred{1}}\otimes\sigma^{(\LOTP)}_{\KOTPApred{2}\KOTPBpred{2}}\otimes
	\rho_E^{\rm pred},
\end{equation}
where $\rho_E^{\rm pred}$ is some state of quantum system belonging to Eve. 
We note that $\rho_E^{\rm pred}$ comes as a tensor factor, so Eve obtains no information about generated keys possessed by Alice and Bob.

As it was mentioned in the main text, the output of the real protocol $\rho_{\rm pred}^{\rm real}(t)\leftarrow \KeyPred\real$ can be different from the ideal one and depends on time $t$.
However, we assume that this difference at the moment $t=t_1$ is limited by the value of $\eps_{\rm pred}$ in the Eq.~\eqref{eq:eps_pred}.

\subsubsection{\Store}
This procedure just stores pre-distributed keys for authentication check of the second round \KrecApred, \KrecBpred, \KOTPApred{2}, and \KOTPBpred{2} until the corresponding authentication tag is generated and (or) verified.
The output of the $\Store\ideal$ is given by 
\begin{multline}
	\rho^{\rm ideal}_{\rm store}= \ketbra{\KrecApred}{\KrecApredst}\otimes  \ketbra{\KrecBpred}{\KrecBpredst} \otimes \\ 
	\ketbra{\KOTPApred{2}}{\KOTPApredst{2}}\otimes  \\ \ketbra{\KOTPBpred{2}}{\KOTPBpredst{2}}\otimes \rho_{E}^{{\rm store}},
\end{multline}
where $\rho_{E}^{{\rm store}}$ is some state of Eve's system which  does not depend on input keys values.

Like in the case of \KeyPred\real, the output state  $\rho^{\rm real}_{\rm store}(t_1,t_2)\leftarrow\Store\real$ is considered to depend on duration of storing defined by $t_1$ and $t_2$.
We assume that the difference between real and ideal storing procedures at the moment $t=t_2$ is limited by $\eps_{\rm store}$ in the Eq.~\eqref{eq:eps_store}.

\subsubsection{\QKD{i}}
This procedure corresponds to running a standard QKD protocol by Alice and Bob, whenever a tampering of the classical channel was not detected in previous rounds, and secure key within previous QKD protocols were generated.
We formalize it by making $V_{i-2}$ and $V_{i-1}$ to be inputs for $\QKD{i}$ procedure for $i\geq 1$ and introducing the following rule:
If the owner of $V_j$ obtains $V_j=\bot$ then he or she sets the values of generated keys on his or here side within $\QKD{j+1}$ and $\QKD{j+2}$ to $\bot$.
We note that $V_j$ is generated on Bob's side for odd $j$, and on Alice's side for even $j$.
We also particularly note that the \QKD{1} procedure does not have any inputs.
It corresponds to the fact that the first QKD protocol always runs.

In order to perform a further authentication check the \QKD{i} outputs logs of resulting classical communication $\MA{i}$ and $\MB{i}$ used in the post-processing of corresponding QKD protocol.
Let $\TQ{i}\in \{0,1\}$ be a random variable corresponded to a tampering event within \QKD{i} procedure: $\TQ{i}=1$ in the case of a tampering (that is $\MA{i}\neq\MB{i}$), and $\TQ{i}=0$ -- otherwise.
Let us also introduce a random variable $\QE{i}\in\{0,1\}$ corresponded to an eavesdropping attempt within \QKD{i} procedure: $\QE{i}=0$ means that secure keys were generated successfully, and $\QE{i}=1$ -- otherwise.
We note that all technical imperfections which take place in the real implementation are considered to be produced by Eve.

We assume that the behavior of the $\QKD{i}^{\rm ideal}$ differs from the behavior of $\QKD{i}^{\rm real}$  only in the case of no tampering ($\TQ{i}=0$) and both parties participating in the QKD protocol ($V_{i-2}=V_{i-1}=\acc$ for $i\geq 2$).
The output $\rho^{\rm ideal}_{{\rm QKD}(i)}\leftarrow \QKD{i}^{\rm ideal}$ for $i \geq 2$ takes the form
	\begin{multline} \label{eq:QKDidealoutput}
	\rho^{\rm ideal}_{{\rm QKD}(i)} := \Pr[\QE{i}=0|\TQ{i}=0] \sigma^{(\Lsec)}_{\Akeyu{i}\Bkeyu{i}}\otimes \\ 	\sigma^{(\LOTP)}_{\KOTPAu{i+2}\KOTPBu{i+2}} \otimes \rho_{\MA{i}\MB{i}E}^{\QE{i}=0}+ \\
	\Pr[\QE{i}=1|\TQ{i}=0] \sigma^{\bot}_{\Akeyu{i}\Bkeyu{i}\KOTPAu{i+2}\KOTPBu{i+2}} \otimes 
	\rho_{\MA{i}\MB{i}E}^{\QE{i}=1},
\end{multline} 
where $\Lsec$ is a fixed length of a secure key for external applications, $\rho_{\MA{i}\MB{i}E}^{\QE{i}=0}$ and $\rho_{\MA{i}\MB{i}E}^{\QE{i}=1}$ are some states of registers \MA{i}, \MB{i} 
and Eve's system with only restriction that $\Pr[\MA{i}\neq\MB{i}]=0$ because of no tampering.
The output of $\QKD{1}^{\rm ideal}$ (in the case of no tampering) is the same as given in Eq.~\eqref{eq:QKDidealoutput}, but with the only difference that it also generates a $\Lrec$-bit recycled key.
It corresponds to an additional tensor factor $\sigma^{(\Lrec)}_{\KrecAu\KrecBu}$ in the case of $\QE{1}=0$ and $\sigma^{\bot}_{\KrecAu\KrecBu}$ in the case of $\QE{1}=1$ in Eq.~\eqref{eq:QKDidealoutput}.
The difference between the outputs of $\QKD{i}^{\rm ideal}$ and $\QKD{i}^{\rm real}$ is assumed to be limited by $\eps_{\rm QKD}$ in the Eq.~\eqref{eq:eps_QKD}.

\subsubsection{\AuthAB{i} and \AuthBA{i}} \label{apx:authcheck}
These procedure check the equality of \MA{i} and \MB{i} by generating an authentication tag with respect to \MA{i} transmitting it the opposite side, and verifying with respect to \MB{i}.
The superscript in the name of the procedure specifies a direction of the tag transfer, thus according to ping-pong pattern \AuthAB{i} is launched for odd $i$, \AuthBA{i} is launched for even $i$.
The main output of the authentication check procedure is the verification flag $V_i\in\{\bot,\acc\}$ which determines a continuation of the key growing process.

The input of the authenticity check procedure also includes the number of unverified quantum generated keys on the tag verifier side (see the Table~\ref{tab:protocols}).
If the verifier detects tampering  then he or she discards the unverified keys by setting the values of verified keys to $\bot$.
Otherwise, the verifying party sets the values of verified keys to be equal to corresponding values of unverified ones.
If as a result, fresh verified keys from \QKD{i} are not equal to $\bot$, then the verification flag $V_i$ is set to the value \acc, otherwise, it is set to $\bot$.
If the tag verifier of the $i^{\rm th}$ round obtains $V_i=\bot$ then he or she does not transmit an authentication tag in the $(i+1)^{\rm th}$ round.
If the tag verifier of the $i^{\rm th}$ round does not receive a message with a tag within a fixed timeout, then $V_i$ is set to $\bot$.

Let $\TA{i}\in\{0,1\}$ correspond to a tampering within an $i^{\rm th}$ authentication tag transfer: $\TA{i}=0$ in the case of no tampering, and $\TA{i}=1$ -- otherwise.
Let $\Tamp{i}$ be a flag of a tampering in $i^{\rm th}$ round: $\Tamp{i} := 0$ if $\TQ{i}=\TA{i}=0$, and $\Tamp{i} := 1$ -- otherwise. 
The ideal authenticity check detects a tampering and sets $V_i=\bot$ whenever $\Tamp{i}=1$.
In contrast, the real authenticity check based on $\eps_{\rm auth}$-ASU$_2$ family has a final probability of non-detection of an occurred tampering which does not exceed  $\eps_{\rm auth}$.
This fact is captured by Eq.~\eqref{eq:eps_auth}.
However, Ineq.~\eqref{eq:eps_auth} is valid assuming that the authentication keys used for tag generation and tag verification are ideal.
Remember, that the authentication tag is generated in the $i^{\rm th}$ round only in the case of $V_{i-1}=\acc$ (otherwise the tag sender transmits nothing), 
and is verified only in the case of $V_{i-2}=\acc$ (otherwise, the tag verifier sets $V_i:=\bot$ without any authenticity check).

Let $\Suc{i} := 1-\Tamp{i}\QE{i}$ be a flag of completing $i^{\rm th}$ round ($i\in\{1,\ldots,\Nmax\}$) without an Eve's attack either on classical or quantum channel.
To simplify a formal analysis we set $\Suc{\Nmax+1}:=0$ and $\Pr[\Tamp{\Nmax+1}=0]=1-\Pr[\Tamp{\Nmax+1}=1]=0$.
Let 
\begin{equation} \label{eq:Nsuc}
\Nsuc := \underset{i\in\{1,\ldots,\Nmax+1\}}{\rm argmin}\left\{ \Suc{i}=0 \right\}-1
\end{equation}
be a number of rounds before the first Eve's attack.

The following lemma enable us to rely on Ineq.~\eqref{eq:eps_auth} in the further security proof.
\begin{lemma} \label{lemma:1}
	Consider $\Nmax$-round key growing process whose procedures
	in the first $i-1$ rounds for $i\in\{2,\ldots,\Nmax\}$ are ideal.
	Let pre-distribution and storing procedures be ideal, while all other procedures be either real or ideal. 
	Let a tampering event occur at $i^{\rm th}$ round, that is $\Tamp{i}=1$.
	Then either
	\begin{itemize}
		\item $V_{i-1}=\acc$ and both the tag sender and the tag verifier of $i^{\rm th}$ round obtain identical secure authentication keys,
		\item or $V_{i-1}=\bot$ and the tag verifier of $i^{\rm th}$ round obtains secure authentication keys,
		\item or $V_{i-2}=\bot$ and the verifier of $i^{\rm th}$ round sets $V_i=\bot$ without any authenticity check.
	\end{itemize}
\end{lemma}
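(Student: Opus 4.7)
The plan is to case-analyze on $(V_{i-2},V_{i-1})$ and, in each case, trace the provenance of the round-$i$ authentication keys back to the ideal prefix of procedures guaranteed by the hypothesis. For concreteness, the recycled key used in round $i$ comes either from $\KeyPred\ideal$ followed by $\Store\ideal$ (when $i=2$) or from $\QKD{1}\ideal$ and is then forwarded by each $\AuthABi{j}$ or $\AuthBAi{j}$ with $j\le i-1$; the round-$i$ OTP keys come either from $\KeyPred\ideal$ and $\Store\ideal$ (when $i=2$) or from $\QKD{i-2}\ideal$ followed by the authenticity check in round $i-1$. All of these procedures lie in the ideal prefix, and the precise flow is the one tabulated in Table~\ref{tab:protocols}.

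The first step is to establish, by induction on $j=1,\ldots,i-1$, an ``ideal-chain'' statement: in any composition of the ideal procedures above, each authentication key, whenever its value on a given party's side is different from $\bot$, is uniformly random, equal on Alice's and Bob's sides, and in tensor product with Eve's system. This follows from the explicit forms of $\rho_{\rm pred}^{\rm ideal}$, $\rho_{\rm store}^{\rm ideal}$ and $\rho^{\rm ideal}_{{\rm QKD}(i)}$ (in particular Eq.~\eqref{eq:QKDidealoutput} together with the additional tensor factor $\sigma^{(\Lrec)}_{\KrecAu\KrecBu}$ contributed by $\QKD{1}\ideal$), and from the rule in Appendix~\ref{apx:authcheck} that an ideal authenticity check forwards its input keys unchanged when the resulting flag is $\acc$ and replaces them by $\bot$ otherwise, never disclosing their values to Eve. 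Reading the input/output pairs of Table~\ref{tab:protocols} round by round makes the inductive step routine.

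With the ideal-chain statement in hand, the three cases of the lemma follow immediately. If $V_{i-2}=\bot$, the verifier of round $i$ (who holds $V_{i-2}$) declines to run any check by the procedure definition and sets $V_i:=\bot$, giving the third alternative. If $V_{i-2}=\acc$ and $V_{i-1}=\bot$, the sender of round $i$ (who holds $V_{i-1}$) transmits no tag, so the verifier times out and sets $V_i:=\bot$; the verifier's authentication keys are untouched in the process, and since $V_{i-2}=\acc$ the ideal-chain statement certifies that they are uniformly random and decoupled from Eve, giving the second alternative. If $V_{i-2}=\acc$ and $V_{i-1}=\acc$, the $\acc$-flag rule of the ideal authenticity check implies that both the sender and the verifier still hold non-$\bot$ round-$i$ authentication keys; by the ideal-chain statement these two copies are the same uniformly random string, in tensor product with Eve, which is exactly the first alternative.

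The main obstacle I expect is the bookkeeping for the ideal-chain step: one has to verify carefully, using Table~\ref{tab:protocols}, that $V_j=\acc$ on a given party's side propagates so as to guarantee that the authentication keys needed two rounds later are still present and unmodified on that party. This amounts to a mechanical induction on $j$ that must be split by ping-pong parity (odd $j$ for Bob's chain of flags, even $j$ for Alice's); everything else in the proof, namely the case split and the final conclusion, is routine once that bookkeeping is fixed.
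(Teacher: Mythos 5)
Your argument is correct, and in substance it is the paper's own proof reorganized: the paper runs the same trichotomy but indexes it by $\Nsuc$ rather than by the flag pair $(V_{i-2},V_{i-1})$. Within the ideal prefix the cases $\Nsuc=i-1$, $\Nsuc=i-2$ and $\Nsuc<i-2$ correspond exactly to your cases $(\acc,\acc)$, $(\acc,\bot)$ and $V_{i-2}=\bot$, and the paper treats $i=2$ separately (ideal \KeyPred{} and \Store{} supplying \AuthBA{2} with ideal keys), which you absorb into the general provenance argument. What your organization buys is that the case split matches the lemma's statement verbatim, and that the key-propagation bookkeeping the paper asserts informally is isolated in one explicit inductive invariant; what the paper's $\Nsuc$-indexing buys is that ``no attack in the first $i-1$ rounds'' immediately gives ideal keys on both sides without tracing flags.

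One correction to the invariant as you phrase it: it is not true that every non-$\bot$ copy of an authentication key is ``equal on Alice's and Bob's sides.'' Precisely in your second case ($V_{i-2}=\acc$, $V_{i-1}=\bot$) the round-$i$ tag sender, acting as verifier of round $i-1$, has already set his or her copy of the round-$i$ OTP key to $\bot$, while the round-$i$ verifier still holds the copy verified at round $i-2$; so the induction with the invariant as stated would hit a counterexample. State it instead as: any non-$\bot$ copy is uniformly distributed and in tensor product with Eve's system, and whenever both parties' copies are non-$\bot$ they coincide. Your case analysis only ever invokes equality when $V_{i-1}=\acc$, where both copies are non-$\bot$, so with this weakened invariant the proof goes through unchanged.
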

\begin{proof}
	First, consider the case $i=2$.
	Due to the fact, that pre-distribution and storing are ideal, the $\AuthBA{2}$ is supplied with ideal keys \KrecApredst and \KOTPApredst{2}.
	
	Then, to prove the Lemma for $i>2$, we consider different possible values of $\Nsuc$ (see Eq.~\eqref{eq:Nsuc}).
	Due to the fact that $\Tamp{i}=1$, we have $0\leq\Nsuc\leq i-1$.
	
	Suppose $\Nsuc=i-1$.
	It means that there was no Eve attack in any of the $i-1$ rounds, so both the tag sender and the tag verifier of the $i^{\rm th}$ round obtained ideal secure keys \KrecA, \KrecB from the first round and \KOTPA{i}, \KOTPA{i} from the $(i-2)^{\rm th}$ round.
	
	Suppose $\Nsuc=i-2$.
	It means that all $i-2$ rounds completed successfully, hence the verifier of the $i^{\rm th}$ round output $V_{i-2}=\acc$ and obtained uniformly random and secure authentication keys on his or her side.
	However, Eve's attack occurred in the $(i-1)^{\rm th}$ and, due to the fact that procedure within $(i-1)^{\rm th}$ round were ideal, the tag sender of $i^{\rm th}$ round output $V_{i-1}=\bot$.
	
	Finally, suppose $\Nsuc<i-2$.
	Due to the fact all procedures in the first $i-1$ rounds are ideal, the parties definitely obtain $V_{\Nsuc+1}=\ldots=V_{i-2}=V_{i-1}=\bot$.
\end{proof}

\subsection{Output of the ideal key growing process  }

Here we provide an explicit form of an output state generated in a key growing process consisted of ideal procedures.

\begin{lemma} \label{lemma:2}
	The $\Nmax$-round key growing process consisted of ideal procedures generates the state
	\begin{equation}\label{eq:totalstate}
	\rho^{\Nmax,{\rm ideal}}_{\rm total} = \sum_{i=0}^{\Nmax}\Pr[\Nsuc=i]\rho^{\Nsuc=i,{\rm ideal}}_{\rm total},
	\end{equation}
	where
	$\rho^{\Nsuc=0,{\rm ideal}}_{\rm total}:=
	\sigma^{\bot}[1]\otimes\ldots\otimes\sigma^{\bot}[\Nmax]\otimes
	\rho_E^{\Nsuc=0}$
	and
	\begin{multline}
		\rho^{\Nsuc=i,{\rm ideal}}_{\rm total} =\Pr[\Tamp{i+1}=0]\sigma^{\rm acc}[1]\otimes\ldots\otimes\sigma^{\rm acc}[i]\otimes \\
		\sigma^{\bot}[i+1]\otimes\ldots\otimes\sigma^{\bot}[\Nmax]\otimes
	\rho_E^{\Nsuc=i}+
	\\
	\Pr[\Tamp{i+1}=1]\sigma^{\rm acc}[1]\otimes\ldots\otimes\sigma^{\rm acc}[i-1]\otimes\sigma^{\sim}[i]\otimes \\
	\sigma^{\bot}[i+1]\otimes\ldots\otimes\sigma^{\bot}[\Nmax]\otimes\widetilde{\rho}_E^{\Nsuc=i}
	\end{multline}
	for $1\leq i\leq\Nmax$.
	Here ${\rho}_E^{\Nsuc=i}$ and $\widetilde{\rho}_E^{\Nsuc=i}$ are some quantum states of a system possessed by Eve. 
	Definitions of $\sigma^{\rm acc}[i]$ and $\sigma^{\bot}[i]$, $\sigma^{\sim}[i]$ are presented in the Table~\ref{tab:notations}.
\end{lemma}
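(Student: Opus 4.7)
The plan is to condition on the value of $\Nsuc$ defined in~\eqref{eq:Nsuc} and carry out a round-by-round analysis of the joint state produced by the sequence of ideal procedures. Since the events $\{\Nsuc = i\}$ for $i = 0, 1, \ldots, \Nmax$ partition the sample space, it suffices to exhibit for each $i$ the conditional output state $\rho^{\Nsuc = i, {\rm ideal}}_{\rm total}$ and then take the probability-weighted sum. The overall structure is thus a finite induction over rounds that tracks which registers have been promoted to verified status, which remain unverified, and which have been reset to $\bot$ by the cascading rules of the ideal procedures.

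For the base case $\Nsuc = 0$, I would observe that $\Suc{1} = 0$ forces either $\QE{1} = 1$ (so $\QKD{1}^{\rm ideal}$ outputs $\bot$ in every key register by the second branch of~\eqref{eq:QKDidealoutput}) or $\Tamp{1} = 1$ (so ideal authentication sets $V_1 = \bot$ and the tag verifier discards the round-$1$ unverified keys). Either way, $V_1 = \bot$ propagates through the subsequent $\QKD{j}$ calls via the rule that $V_{i-1} = \bot$ forces the round-$j$ keys to $\bot$, giving the all-$\sigma^{\bot}$ state. For $\Nsuc = i \geq 1$, I would next note that all of rounds $1, \ldots, i$ satisfy $\Tamp{j} = 0$ and $\QE{j} = 0$, so the first branch of~\eqref{eq:QKDidealoutput} applies: Alice and Bob obtain fresh ideal matching key registers, and since $\MA{j} = \MB{j}$ holds identically, every ideal authentication check succeeds, promoting unverified keys to verified form and yielding a $\sigma^{\rm acc}[j]$ factor. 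Round $i+1$ is the first failure point, and splits into two sub-cases according to whether the failure is due to QKD breakdown without tampering ($\Tamp{i+1} = 0$, $\QE{i+1} = 1$) or due to tampering ($\Tamp{i+1} = 1$).

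The sub-case $\Tamp{i+1} = 0$ is clean: round $i+1$ produces $\bot$ keys from $\QKD{i+1}^{\rm ideal}$, the authentication check on $\bot$ inputs yields $V_{i+1} = \bot$, and crucially no tag-message tampering disrupts the confirmation chain for round $i$, so round $i$ retains its $\sigma^{\rm acc}[i]$ factor. The sub-case $\Tamp{i+1} = 1$ is the delicate one: ideal authentication in round $i+1$ always detects the tampering and sets $V_{i+1} = \bot$, but the party who acts as tag \emph{sender} in round $i+1$ (equivalently, tag verifier in round $i$) has already promoted their own round-$i$ key to verified, while the opposite party (tag sender of round $i$) never receives the expected round-$i+1$ confirmation and hence keeps their round-$i$ key unverified. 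This asymmetric configuration is precisely $\sigma^{\sim}[i]$ as defined in Table~\ref{tab:notations}. In both sub-cases, rounds $j > i+1$ collapse to $\sigma^{\bot}[j]$ by cascading, and Eve's system remains in tensor product with Alice's and Bob's classical registers because all ideal procedures emit keys independent of Eve.

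The main obstacle will be the careful bookkeeping required to identify $\sigma^{\sim}[i]$ with the asymmetric configuration described above and to verify, for each sub-case, that the residual state on Eve's side can be absorbed into a single auxiliary operator $\rho_E^{\Nsuc = i}$ or $\widetilde{\rho}_E^{\Nsuc = i}$; this amounts to checking that the key-bearing factors emitted by the ideal $\KeyPred$, $\Store$, $\QKD{j}$, and $\AuthAB{j}$/$\AuthBA{j}$ procedures all decouple cleanly from Eve's registers along the branch being analyzed. Once this factorization is in hand, summing over $\Nsuc \in \{0, 1, \ldots, \Nmax\}$ with probabilities $\Pr[\Nsuc = i]$ and further splitting the $i \geq 1$ contributions by $\Pr[\Tamp{i+1}]$ reproduces the decomposition~\eqref{eq:totalstate}.
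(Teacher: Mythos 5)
Your proposal is correct and follows essentially the same route as the paper's own proof: condition on $\Nsuc$, give the all-$\bot$ state for $\Nsuc=0$, split the case $\Nsuc=i\geq 1$ according to $\Tamp{i+1}$ into the symmetric ($\sigma^{\rm acc}[i]$) and asymmetric ($\sigma^{\sim}[i]$) branches, observe that Eve's system stays in tensor product, and sum over $i$ with the weights $\Pr[\Nsuc=i]$. The only (inconsequential) imprecision is in the tampered branch: the round-$(i+1)$ verifier does not merely ``keep the round-$i$ key unverified'' but, having detected tampering and being unable to tell whether it occurred in round $i$ or $i+1$, discards that key by setting the verified register to $\bot$, which is precisely what yields $\sigma^{\sim}[i]$.
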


\begin{table*} \centering
	\begin{tabular}{|c|c|c|c|} \hline
		$i$	&	
		$\sigma^{\rm acc}[i]$ & $\sigma^{\bot}[i]$ & 
		$\sigma^{\sim}[i]$ \\ \hline
		
		$1$	& 
		$\sigma^{(\Lrec)}_{\KrecA\KrecB}\otimes\sigma^{(\Lsec)}_{\Akey{1}\Bkey{1}}$ & 
		$\sigma^{\bot}_{\KrecA\KrecB\Akey{1}\Bkey{1}}$ & $\sigma^{(\Lsec)}_{\Bkey{1}}\otimes\sigma^{(\Lrec)}_{\KrecB}\otimes\sigma^{\bot}_{\KrecA\Akey{1}}$\\ \hline
		
		$2\ldots \Nmax-2$ & 
		$\sigma^{(\Lsec)}_{\Akey{i}\Bkey{i}}$ & $\sigma^{\bot}_{\Akey{i}\Bkey{i}}$ & 
		$\begin{aligned}
		\sigma^{(\Lsec)}_{\Akey{i}}\otimes\sigma^{\bot}_{\Bkey{i}} & \text{ for even }i\\
		\sigma^{\bot}_{\Akey{i}}\otimes\sigma^{(\Lsec)}_{\Bkey{i}} & \text{ for odd }i 
		\end{aligned}$ \\ \hline
		
		$\Nmax-1$ & 
		$\begin{aligned}
		&\sigma^{(\Lsec)}_{\Akey{\Nmax-1}\Bkey{\Nmax-1}}\otimes\\
		&\sigma^{(\LOTP)}_{\KOTPA{\Nmax+1}\KOTPB{\Nmax+1}}
		\end{aligned}$ &
		$\begin{aligned}
		&\sigma^{\bot}_{\Akey{\Nmax-1}\Bkey{\Nmax-1}}\otimes\\
		&\sigma^{\bot}_{\KOTPA{\Nmax+1}\KOTPB{\Nmax+1}}
		\end{aligned}$ &
		$\begin{aligned}
		& \sigma^{\bot}_{\Akey{\Nmax-1}\KOTPA{\Nmax+1}}\otimes \\
		& \sigma^{(\Lsec)}_{\Bkey{\Nmax-1}}\otimes\sigma^{(\LOTP)}_{\KOTPB{\Nmax+1}}
		\end{aligned}$ \\ \hline
		
		$\Nmax$ &
		$\begin{aligned}
		& \sigma^{(\Lsec)}_{\Akey{\Nmax}\Bkeyu{\Nmax}}\otimes \\
		& \sigma^{(\LOTP)}_{\KOTPA{\Nmax+2}\KOTPBu{\Nmax+2}}
		\end{aligned}$ &
		$\begin{aligned}
		&\sigma^{\bot}_{\Akey{\Nmax}\Bkeyu{\Nmax}}\otimes\\
		&\sigma^{\bot}_{\KOTPA{\Nmax+2}\KOTPBu{\Nmax+2}}
		\end{aligned}$ &
		\\\hline
	\end{tabular}
	\caption{Notation that are used for defining a state generated in the ideal key growing process.}
	\label{tab:notations}
\end{table*}
\begin{proof}
	To prove the Lemma we consider the output states generated with all possible values of $\Nsuc\in\{0,\ldots,\Nmax\}$.
	
	Suppose $\Nsuc=0$.
	It means that there was Eve's attack in the first round.
	Due to this fact, Bob obtains $V_1=\bot$ in \AuthAB{1} and sets all keys on his side within the whole key growing process to $\bot$.
	Then because of no valid tag in \AuthBA{2}, Alice discards generated keys from the first round and set all key has to be generated in the next rounds to $\bot$.
	As a result, Alice and Bob obtain the state $\sigma^{\bot}[1]\otimes\ldots\otimes\sigma^{\bot}[\Nmax-1]$, while Eve obtain some state, which we denote as $\rho_E^{\Nsuc=0}$.
	
	Suppose $1\leq\Nsuc\leq\Nmax{-1}$.
	It means that there was Eve's attack in $(\Nsuc+1)^{\rm th}$ round.
	Consider two cases: (i) an attack on quantum channel only, that is $\QE{\Nsuc+1}=1$, $\Tamp{i}=0$, and (ii) an attack on classical channel $\Tamp{i}=1$ 
	regardless of the value of $\QE{\Nsuc+1}$. 
	In both cases, the parties obtain ideal keys and in all rounds before the $\Nsuc^{\rm th}$ and obtain no key in all rounds after $\Nsuc^{\rm th}$. 
	It corresponds to the generated state $\sigma^{\acc}[1]\otimes\ldots\otimes\sigma^{\acc}[\Nsuc-1]\otimes \sigma^{\bot}[\Nsuc+1]\otimes\ldots\otimes\sigma^{\bot}[\Nmax]$.
	The difference appears for the keys generated in $\Nsuc^{\rm th}$ round.
	In the first case, the authentication tag in the authenticity check procedure of the $(\Nsuc+1)^{\rm th}$ round reaches the verifier, and though the key generation failed in $(\Nsuc+1)^{\rm th}$ round, 
	the verifier approves the keys generated in $\Nsuc^{\rm th}$ round.
	So, the state $\sigma^{\acc}[\Nsuc]$ generated.
	In the second case, the verifier of the $(\Nsuc+1)^{\rm th}$ round discovers a fact of attack but fails to understand in which particular round $\Nsuc^{\rm th}$ or $(\Nsuc+1)^{\rm th}$  the attack occurs.
	So the verifier discards keys from $\Nsuc^{\rm th}$ round as well and the state $\sigma^{\sim}[\Nsuc]$ is generated.
	We point out that in both cases Eve does not obtain any information about generated keys and obtains some arbitrary factorized state, which we denote with ${\rho}_E^{\Nsuc=i}$ or $\widetilde{\rho}_E^{\Nsuc=i}$.
	
	Finally, if $\Nsuc=\Nmax$, then there was no any attack during the whole key growing process and the parties obtain ideal keys.
	We note that Bob's keys from the final round remain to be unverified since the verifier of the final authenticity check procedure \AuthBA{\Nmax} is Alice.
	The state of Eve's system appears to be factorized with the state of Alice's and Bob's keys, so the resulting output state takes the form $\sigma^{\acc}[1]\otimes\ldots\otimes\sigma^{\acc}[\Nmax]\otimes{\rho}_E^{\Nsuc=\Nmax}$.
	
	Putting it all together and taking into account the fact that $\Nsuc$ is a random variable we obtain the state in the form of~Eq.~\eqref{eq:totalstate}. 	
\end{proof}

We note that in the output of the ideal key growing, the state of Eve's system is always factorized related to the shared state of Alice and Bob.
The only thing that Eve can do is interrupting the key growing.
Also note that a situation is possible when one of the legitimated parties generates a key being sure that it is secure, while the other one discards a corresponding key because of a tampering detection in the round (this situation is described by the state $\sigma^{\sim}[i]$).
This problem is common for every QKD protocol, regardless of whether each message is authenticated or delayed authentication is used:
One can think about a scenario where Eve blocks a last classical message within post-processing procedures.

\subsection{Proof of the Theorem~\ref{thm:mainsec}}\label{apx:mainsec}
\begin{proof}
	Consider a key growing process consisted of real procedures.
	The idea of the proof is to consistently replace real procedures with ideal ones and calculate distances between appearing intermediate states.
	Let $\rho^{\Nmax,{\rm mix}(0)}_{\rm total}$ be a state resulted from a new key growing process where real pre-distribution and storing procedures are replaced with ideal ones (see Fig.~\ref{fig:chain}).
	According to the properties of ideal procedures, one has
	\begin{equation}
		\Dist{\rho^{\Nmax,{\rm real}}_{\rm total}}{\rho^{\Nmax,{\rm mix}(0)}_{\rm total}}\leq \eps_{\rm pred}+\eps_{\rm store}.
	\end{equation}
	
	Then consider further replacing of $\QKD{1}^{\rm real}$ and $\AuthABr{1}$ with $\QKD{1}^{\rm ideal}$ and $\AuthABi{1}$.
	Let $\rho^{\Nmax,{\rm mix}(1)}_{\rm total}$ be the output state of this new process with four ideal procedures.
	
	According to the properties of replaced procedures, one has
	\begin{equation}
		\Dist{\rho^{\Nmax,{\rm mix}(0)}_{\rm total}}{\rho^{\Nmax,{\rm mix}(1)}_{\rm total}}\leq \eps_{\rm QKD}+\eps_{\rm auth}.
	\end{equation}
	We are able to replace \AuthABr{1} with \AuthABi{1} due to fact that it is supplied with ideal keys from \KeyPred\ideal.
	
	We then consider the following sequence of operations.
	Sequentially for each $i$ from 2 to \Nmax~we replace $\QKD{i}^{\rm real}$ and $\AuthABr{i}$ (or $\AuthBAr{i}$) with $\QKD{i}^{\rm ideal}$ and $\AuthABi{i}$ (or $\AuthBAi{i}$).
	Denote the resulted output state $\rho^{\Nmax,{\rm mix}(i)}_{\rm total}$.
	Next, according to the Lemma~\ref{lemma:2} one has
	\begin{equation}
		\Dist{\rho^{\Nmax,{\rm mix}(i)}_{\rm total}}{\rho^{\Nmax,{\rm mix}(i-1)}_{\rm total}}\leq \eps_{\rm QKD}+\eps_{\rm auth}.
	\end{equation}
	Here we also employ our assumption about distance between real and ideal procedures presented by Eq.~\eqref{eq:eps_QKD} and Eq.~\eqref{eq:eps_auth}
	
	Taking all together and noting that $\rho^{\Nmax,{\rm mix}(\Nmax)}_{\rm total}=\rho^{\Nmax,{\rm ideal}}_{\rm total}$ we obtain
	\begin{multline}
		\Dist{\rho^{\Nmax,{\rm ideal}}_{\rm total}}{\rho^{\Nmax,{\rm real}}_{\rm total}} \\ \leq \sum_{i=1}^{\Nmax}\Dist{\rho^{\Nmax,{\rm mix}(i)}_{\rm total}}{\rho^{\Nmax,{\rm mix}(i-1)}_{\rm total}}+\\\Dist{\rho^{\Nmax,{\rm mix}(0)}_{\rm total}}{\rho^{\Nmax,{\rm ideal}}_{\rm total}}\\=\eps_{\rm pred} + \eps_{\rm store} + \Nmax(\eps_{\rm auth}+\eps_{\rm QKD}).
	\end{multline}	
	This completes the proof.
\end{proof}

\end{document}